\newtheorem{theorem}{Theorem}
\newtheorem{corollary}{Corollary}
\newtheorem{remark}{Remark}
\def\Re{\mathop{\rm Re}}
\def\Im{\mathop{\rm Im}}
\def\IN{\text{IN}}
\def\OUT{\text{OUT}}
\def\ss{\text{ss}}
\DeclareMathOperator{\Tr}{Tr}
\title{Time-Domain Sensitivity of the Tracking Error}
\author{S.~O'Neil, \IEEEmembership{Member, IEEE},
  S.~G.~Schirmer, \IEEEmembership{Member, IEEE},
  F.~C.~Langbein, \IEEEmembership{Member, IEEE},
  C.~A.~Weidner, \IEEEmembership{Member, IEEE}, and
  E.~Jonckheere, \IEEEmembership{Life Fellow, IEEE}
  \thanks{This research was supported in part by NSF Grant IRES-1829078. SON is supported by the US Army Advanced Civil Schooling program.}
  \thanks{SON and EAJ are with the Department of Electrical and Computer Engineering, University of Southern California, Los Angeles, CA 90089 USA (email: \texttt{seanonei@usc.edu}, \texttt{jonckhee@usc.edu}).}
  \thanks{SGS is with the Faculty of Science \& Engineering, Swansea University, Swansea SA2 8PP, UK (e-mail: \texttt{s.m.shermer@gmail.com}).}
  \thanks{FCL is with the School of Computer Science and Informatics, Cardiff University, Cardiff CF24 4AG, UK (e-mail: \texttt{frank@langbein.org}).}
  \thanks{CAW is with the Quantum Engineering Technology Laboratories, University of Bristol, Bristol BS8 1FD, United Kingdom (e-mail: \texttt{c.weidner@bristol.ac.uk}).}
}
\begin{document}
\maketitle

\begin{abstract}
A strictly time-domain formulation of the log-sensitivity of the error signal to structured plant uncertainty is presented and analyzed through simple but representative classical and quantum systems. Results demonstrate that across a wide range of physical systems, maximization of performance (minimization of the error signal) asymptotically or at a specific time comes at the cost of increased log-sensitivity, implying a time-domain constraint analogous to the frequency-domain identity $\mathbf{S(s) + T(s) = I}$. While of limited value in classical problems based on asymptotic stabilization or tracking, such a time-domain formulation is valuable in assessing the reduced robustness cost concomitant with high-fidelity quantum control schemes predicated on time-based performance measures. 
\end{abstract}

\section{Introduction}

In the realm of feedback control, traditional sensitivity analysis of a closed-loop system to uncertain parameters is accomplished in the frequency-domain. Standard definitions for the sensitivity examine the derivative of the closed-loop plant $T(s)$ to differential perturbations in a given element $K(s)$ given by $\partial{T(s)}/\partial{K(s)}$. As this measurement scales with the units used to describe the plant and parameter, a more useful formulation is the \emph{log-sensitivity} of the closed-loop plant to variations in a given element through~\cite{Dorf2011}
\begin{equation}
\frac{\partial{T(s)}/T(s)}{\partial{K(s)}/K(s)} = \frac{\partial{T(s)}}{\partial{K(s)}}\frac{K(s)}{T(s)}.
\end{equation}
While valuable from a frequency-domain perspective, this method does not yield information about how the log-sensitivity evolves with time, with time-domain considerations often being grouped into performance measures such as rise and settling times. 

Some researchers have proposed methods for analyzing the sensitivity of system performance in the time-domain, though the methods tend to be system-specific. In~\cite{Chen2016} and~\cite{Dobes2005}, methods for analyzing the sensitivity of the output transient response of distributed transmission lines and microwave circuits are proposed. Additionally,~\cite{Nouri2018} and~\cite{Lombardi2019} provide methods for computing time-domain sensitivity measures for active and passive circuits. In particular,~\cite{Lombardi2019} convincingly demonstrates the computational efficiency of analytic methods over brute-force comprehensive perturbation analysis. While providing valuable methods for computing sensitivity in the time-domain, the current research in this area does not provide a predictive model relating sensitivity to performance metrics. This requirement for a predictive, time-domain model to gauge trade-offs in robustness and performance is becoming increasingly important in the field of quantum technology. Control problems in this field ranging from fast state transfer to the implementation of quantum gates are fundamentally time-based and not well-described by existing frequency-domain methods~\cite{Schirmer2015, Schirmer2016, Glaser2015}. Furthermore, the eigenstructure of closed quantum systems is characterized by poles on the imaginary axis that preclude application of common small-gain theorem-based robustness analysis methods such as structured singular value analysis~\cite{zhou:robust,10.5555/225507}.

In this paper, we extend the concept of the log-sensitivity from the frequency-domain analysis of transfer functions to the time-domain analysis of a signal. In particular, we examine the error signal $e(t) = y(t) - r(t)$ of a Single-Input, Single-Output (SISO) system to structured uncertainty in the system parameters. We first demonstrate the methodology with two classical systems and then extend the concept to quantum systems where time-domain specifications, particularly read-out time (i.e., the time at which the state of the system is measured), are crucial to system performance~\cite{walraff, simmons, putterman}. The main contribution of this paper is to provide a characterization of how the log-sensitivity of the error behaves as the output approaches the desired reference input. We show that the log-sensitivity of the error diverges to infinity as $y(t) \rightarrow r(t)$. Furthermore, the manner in which the log-sensitivity diverges is characterized by the multiplicity and character (real versus complex) of the dominant eigenvalue(s) of the closed-loop system and structure of the uncertain parameters.

In Section~\ref{prelim}, we establish the paradigm for calculation of the log-sensitivity of the error in terms of a classical SISO system with full-state feedback. Here, the pole-placement simultaneously meets design specifications and provides zero-steady state error as in~\cite{Chen2013}. In Section~\ref{math}, we derive the time-domain log-sensitivity of the error, prove that the limit of the log-sensitivity diverges as the output approaches the desired steady-state value, and characterize this divergence in terms of the dominant eigenvalue(s) of the closed-loop system. In Sections~\ref{spring-mass},~\ref{circuit},~\ref{qubit}, and~\ref{state transfer} we apply our analysis to both classical feedback systems and quantum systems, one subject to dissipation and one that evolves unitarily. This latter case is particularly interesting due to the difficulty of applying classical robust control methods to closed quantum systems, save for some special cases~\cite{IRP_quantum_survey_as_published, Petersen2013,IRP_old}.

\section{Preliminaries}\label{prelim}

We consider the general case of a SISO system with multiple state variables and the control objective of tracking a step input with zero error. The system is represented by
\begin{equation}\begin{aligned}
  \dot{x} & = \tilde{A}x + bu,\\
  y & = cx.
\end{aligned}\end{equation}
Here, $c \in \mathbb{R}^{1 \times N}$ and $b \in \mathbb{R}^{N \times 1}$ since we consider a SISO system. The matrix $\tilde{A} \in \mathbb{R}^{N \times N}$ is given by $\tilde{A} = A_1 + S\xi_0 + S(\xi-\xi_0)$. The nominal dynamics matrix is $A_1+S\xi_0$ and $\xi \in [\xi_1 ,\xi_2]$ is an uncertain parameter with nominal value $\xi_0 \in [\xi_1,\xi_2]$. This uncertain parameter enters the dynamics additively through the matrix $S$. 

Assuming the system is controllable, we use state feedback to place the poles of the system in accordance with our design specifications. Introducing the unit step reference signal $r(t)$, the system input is $u(t) = -kx(t) + k_0 r(t)$ where $k \in \mathbb{R}^{1 \times N}$ is the vector of static feedback gains and $k_0$ is the scalar gain used to scale the reference. Including the state feedback, we have the closed-loop state matrix $A = \tilde{A} - bk = A_1 - bk + S\xi$ and the state equation becomes $\dot{x} = Ax + bk_0r(t)$. The nominal state matrix with feedback is now $ A_0 = \left( A_1 - bk \right) + S\xi_0$.

We determine the time-domain state evolution as 
\begin{equation}\begin{aligned}
  x(t) & = e^{At}x(0) + \int_{0}^{t}e^{A(t-\tau)} bk_0r(\tau) \;d\tau\\
       & = e^{At}x(0) + k_0 e^{At} \int_{0}^{t} e^{-A \tau} b \;d\tau,
\end{aligned}\end{equation}
since $r(t)$ has unit magnitude. Without loss of generality and to simplify the exposition, we set $x(0) = 0$. Constraining our analysis to the zero-state response gives
\begin{equation}
  x(t) = k_0 (e^{At} - I)A^{-1}b.
\end{equation}
The term $-A^{-1}b$ enters as the vector of steady-state values of the step input-to-state response of the transfer function $(sI-A)^{-1}b = G(s)$. This follows immediately from evaluation of $G(s)|_{s=0} = -A^{-1}b$.

Since our goal is to track a unity step input so that $y(t) = cx(t) = r(t) = 1$, 
we ultimately want $-k_0 c A^{-1}b = 1$ or $k_0 = -(c A^{-1} b)^{-1}$. For simplicity, we write $A^{-1}b$ as the vector $\beta$. The output becomes
\begin{equation}
  y(t) = cx(t) = k_0 ce^{At}\beta - k_0 c\beta = k_0 c e^{At}\beta + 1,
\end{equation}
and we define the error signal as 
\begin{equation} \label{eq:time_domain_err}
  e(t) = r(t) - y(t) = -k_0 c e^{At} \beta.
\end{equation}
 
\section{Log Sensitivity of the Error}\label{math} 

With the time-domain error signal in~\eqref{eq:time_domain_err}, we define the log-sensitivity of the error to differential perturbations in the parameter $\xi$ as
\begin{equation}
  s(\xi_0,t) = \left.\frac{\partial e(t)}{\partial \xi} \frac{\xi}{e(t)} \right|_{\xi = \xi_0}.
\end{equation}
 
In general, the matrices $A = A_1 - bk +S\xi$ and $S$ do not commute. As such, calculation of the derivative of $e(t)$ with respect to the uncertain parameter $\xi$ follows from~\cite{Havel1995} where
\begin{equation}\label{eq: matrix_deriv}
\begin{aligned}
  \frac{\partial{e(t)}}{\partial{\xi}} 
  &= -k_0 c \frac{\partial}{\partial \xi} e^{(
  A_1 - bk +S\xi)t} \beta\\
  &= -k_0 c \left(\int_{0}^{t} e^{(t- \tau)A_0}S e^{\tau A_0} d \tau \right) \beta.
\end{aligned}
\end{equation}
 
To be precise, note that in the limit as $\Delta \xi \rightarrow 0$, we define the directional derivative of $e^{A_0t}$ in the direction of $S$ as in~\cite{Havel1995},
\begin{align}\label{eq:limit}
  D_S(t,A) 
   &= \lim_{\Delta \xi \rightarrow 0} \frac{1}{\Delta \xi} 
      (e^{t(A_1 - bk +S(\xi_0 + \Delta \xi))} - e^{t(A_1 - bk +S \xi_0)})\nonumber\\
   &= \lim_{\Delta \xi \rightarrow 0} \frac{1}{\Delta \xi} 
      {\left(e^{t\left(A_{0} + S\Delta \xi\right)} - e^{tA_{0}} \right),}
\end{align}
where $\Delta \xi=\xi-\xi_0$. 
As shown in the Appendix, we can express $\frac{\partial}{\partial \xi} e^{At}= MX(t)M^{-1}$ with $X(t)$ defined in Eq.~\eqref{eq:diag} or~\eqref{eq:non-diag}. Here, $M$ is the matrix of (generalized) eigenvectors that induce the similarity transformation $A_0 = MJM^{-1}$ with $J$ the Jordan normal form of $A_0$. 
Thus, 
\begin{equation}\label{eq: derivative_e(t)}
  \frac{\partial e(t)}{\partial{\xi}} = -k_0 c M X(t) M^{-1} \beta.
\end{equation}
Dividing by $e(t)$ while multiplying by $\xi_0$, we produce the log-sensitivity of the error
\begin{equation}\label{eq:log-sensitivity}
  \begin{aligned}
    s(\xi_0,t) &= \left.\frac{\partial e(t)}{\partial \xi} \frac{\xi}{e(t)}\right|_{\xi = \xi_0}
               = \frac{\xi_0 c M X(t) M^{-1} \beta}{c M e^{J t} M^{-1} \beta}.
  \end{aligned}
\end{equation}

Now consider the log-sensitivity in the case of perfect tracking when $y(t\to \infty) \to 1$ (equivalently $e(t \to \infty) \to 0$). Given a controllable linear system, by state feedback we guarantee convergence of $e(t)$ to zero, but only asymptotically as $t \rightarrow \infty$.

To determine whether the limit of $s(\xi_0,t)$ as $t \rightarrow \infty$ exists or if it diverges, we examine the ratio of the numerator $\mathcal{N}(t)$ and denominator $\mathcal{D}(t)$ of the scalar $s(\xi_0,t)$. Our expression for the log-sensitivity is now
\begin{equation}\label{eq: ratio}
   s(\xi_0,t)  
    = \left. \frac{\partial e(t)}{\partial \xi}
             \frac{\xi}{e(t)} \right|_{\xi=\xi_0}
    =  \frac{\mathcal{N}(t)}{\mathcal{D}(t)}.
\end{equation}
For simplicity we introduce the following notation.  Let $cM = z = \begin{bmatrix} z_1 & z_2 &  \ldots & z_N \end{bmatrix} \in \mathbb{C}^{1 \times N}$, 
where $z_k = cv_k$, the inner product of the row vector $c$ with the $k$-th (generalized) eigenvector of $A_0$. Likewise, describing the rows of $M^{-1}$ by $\nu_k$, we write the product $M^{-1}\beta$ as the column vector $w \in \mathbb{C}^{N \times 1}$ with components $w_k = \nu_k\beta$ and let $\bar{s}_{mn}$ be the elements of the matrix $\bar{S}=M^{-1}SM$.  Let the eigenvalues be ordered in increasing order of the magnitude of their real parts.  An eigenvalue $\lambda_m$ is \emph{dominant} if $\Re(\lambda_m)=\max_n\Re(\lambda_n)$.  Eigenvalues $\lambda_m$ with $\bar{s}_{mn}=\bar{s}_{nm}=0$ for all $n$ can be ignored. 

We now state the following main results of the paper:
\begin{theorem}\label{thm: linear_div}
If $A_0$ is diagonalizable with dominant, real eigenvalue $\lambda_1 \le 0$ with algebraic multiplicity one, then the log-sensitivity $s(\xi_0,t)=\xi_0 \bar{s}_{11} t + R(t)$, where $\lim_{t\to\infty} R(t)$ is finite, i.e., $s(\xi_0,t)$ diverges linearly as $\xi_0 \bar{s}_{11} t$ as $t \rightarrow \infty$ if $\bar{s}_{11}\neq 0$.
\end{theorem}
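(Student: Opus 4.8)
The plan is to take the limit $t\to\infty$ directly on $s(\xi_0,t)=\mathcal{N}(t)/\mathcal{D}(t)$, working throughout in the eigenbasis of $A_0$. First I would substitute the diagonalizable form of $X(t)$ from Eq.~\eqref{eq:diag}: conjugating the integral in~\eqref{eq: matrix_deriv} by $M$ gives $X(t)=\int_0^t e^{(t-\tau)J}\bar{S}\,e^{\tau J}\,d\tau$, and with $J=\operatorname{diag}(\lambda_1,\dots,\lambda_N)$ its entries are $X_{mn}(t)=\bar{s}_{mn}\,t\,e^{\lambda_m t}$ when $\lambda_m=\lambda_n$ and $X_{mn}(t)=\bar{s}_{mn}\big(e^{\lambda_n t}-e^{\lambda_m t}\big)/(\lambda_n-\lambda_m)$ otherwise. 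Writing $z=cM$ and $w=M^{-1}\beta$, the numerator $\mathcal{N}(t)=\xi_0\,z X(t) w$ and denominator $\mathcal{D}(t)=z\,e^{Jt}w$ then become finite linear combinations of terms $(\text{const})\,e^{\lambda_m t}$ and $(\text{const})\,t\,e^{\lambda_m t}$; the $t$-terms come only from the eigenvalue-degenerate entries of $X(t)$, and since $\lambda_1$ has algebraic multiplicity one the coefficient of $t\,e^{\lambda_1 t}$ in $\mathcal{N}(t)$ is exactly $\xi_0 z_1\bar{s}_{11}w_1$.

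Next comes the asymptotic reduction. Take $\lambda_1$ to be the unique eigenvalue of largest real part (real, algebraic multiplicity one, $\Re(\lambda_1)>\Re(\lambda_m)$ for $m\neq1$, which is how I read "dominant" here) and divide numerator and denominator by $e^{\lambda_1 t}$. Every other mode contributes a factor $e^{(\lambda_m-\lambda_1)t}$, and even $t\,e^{(\lambda_m-\lambda_1)t}\to0$; in fact each such remainder is $O(t\,e^{-\gamma t})$ with $\gamma=\Re(\lambda_1)-\max_{m\neq1}\Re(\lambda_m)>0$. Collecting the surviving pieces,
\begin{equation*}
\frac{\mathcal{D}(t)}{e^{\lambda_1 t}}=z_1 w_1+o(1),\qquad \frac{\mathcal{N}(t)}{e^{\lambda_1 t}}=\xi_0 z_1\bar{s}_{11}w_1\,t + C + o(1),
\end{equation*}
where $C=\xi_0\sum_{m\neq1}\big(z_m\bar{s}_{m1}w_1+z_1\bar{s}_{1m}w_m\big)/(\lambda_1-\lambda_m)$ is the finite constant contributed by the $n=1$ column and $m=1$ row of the off-diagonal part of $z X(t) w$ (every index pair with $m,n\neq1$ vanishes after the division).

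Finally, using $z_1 w_1\neq0$ --- i.e.\ the slowest mode actually appears in $e(t)=-k_0\sum_m z_m w_m e^{\lambda_m t}$, a hypothesis I would have to state explicitly --- I divide and expand $1/(z_1 w_1+o(1))$ as a geometric series:
\begin{equation*}
s(\xi_0,t)=\frac{\xi_0 z_1\bar{s}_{11}w_1\,t+C+o(1)}{z_1 w_1+o(1)}=\xi_0\bar{s}_{11}\,t+\frac{C}{z_1 w_1}+o(1),
\end{equation*}
the cross term $t\cdot o(1)$ being $O(t\,e^{-\gamma t})\to0$. Hence $R(t)=s(\xi_0,t)-\xi_0\bar{s}_{11}t\to C/(z_1 w_1)$, a finite limit, so $s(\xi_0,t)=\xi_0\bar{s}_{11}t+R(t)$ as claimed, and when $\bar{s}_{11}\neq0$ the leading behavior is the linear divergence $\xi_0\bar{s}_{11}t$.

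The main obstacle I anticipate is the bookkeeping in the second step: confirming that after division by $e^{\lambda_1 t}$ the only surviving growing term is $\xi_0 z_1\bar{s}_{11}w_1 t$ (so the diagonal $X_{mm}$ terms with $m\neq1$, and any $X_{mn}$ with $\lambda_m=\lambda_n\neq\lambda_1$, all decay) and correctly tracking which off-diagonal pairs leave a constant behind, together with the small-but-essential observation that $t$ times an exponentially small quantity still vanishes, so the denominator's $o(1)$ error cannot corrupt the linear asymptote. The degenerate case $z_1 w_1=0$, and the trivial sub-case $\bar{s}_{11}=0$ --- in which the same computation gives a bounded $s(\xi_0,t)$ --- would be flagged separately.
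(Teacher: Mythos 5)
Your proposal is correct and follows essentially the same route as the paper's proof: expand $\mathcal{N}(t)=\xi_0\sum_{m,n}z_m w_n\bar{s}_{mn}\phi_{mn}(t)$ and $\mathcal{D}(t)=\sum_m z_m w_m e^{\lambda_m t}$ in the eigenbasis, factor out $e^{\lambda_1 t}$, keep the unique linear term $\xi_0 z_1\bar{s}_{11}w_1 t$ plus the constant contributed by the first row and column of $\bar{S}$ (your $C$ equals the paper's $\xi_0 g_0$), and absorb the exponentially decaying remainders --- including the $t\cdot o(1)$ cross term --- into $R(t)$. Your explicit flagging of the hypothesis $z_1 w_1\neq 0$ is a condition the paper uses implicitly when dividing by $z_1 w_1+\mathcal{D}_r(t)$, but otherwise the two arguments coincide step for step.
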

\begin{proof}
$\mathcal{N}(t)$ and $\mathcal{D}(t)$ in~\eqref{eq: ratio} take the form 
\begin{subequations}\begin{align}
  \mathcal{N}(t) &=  \xi_0 \sum_{m,n=1}^N z_m w_n \bar{s}_{mn} \phi_{mn}(t), \label{eq: N(t)} \\
  \mathcal{D}(t) &=\sum_{m=1}^N z_{m} w_{m} e^{\lambda_m t} \label{eq: D(t)},
\end{align}\end{subequations}
where 
\begin{equation}\label{eq: phi}
  \phi_{mn}(t) = \phi_{nm}(t) = \left\{\begin{matrix} 
    \frac{e^{\lambda_m t} - e^{\lambda_{n} t}}{\lambda_m - \lambda_{n}} & \text{for }\lambda_m \neq \lambda_n,\\[.5ex]
    t e^{\lambda_m t} & \text{for }\lambda_m = \lambda_n,
  \end{matrix} \right.
\end{equation}
and $\lambda_n$ are the eigenvalues of $A_{0} = A_1-bk+S\xi_0$, including repeated eigenvalues. Recall that all eigenvalues have $\Re(\lambda_n) \leq 0$, ensuring marginal stability at a minimum, and are ordered in increasing magnitude of their real parts so $\lambda_1$ is the dominant pole. Factoring $e^{\lambda_1 t}$ from both $\mathcal{N}(t)$ and $\mathcal{D}(t)$, and defining
\begin{equation}\label{eq: phi_tilde}
  \tilde{\phi}_{mn}(t) = \left\{\begin{matrix} 
    \frac{e^{(\lambda_m-\lambda_1) t} - e^{(\lambda_n -\lambda_1)t}}{\lambda_m - \lambda_n} & \text{for } \lambda_m \neq \lambda_n,\\[.5ex]
    t e^{(\lambda_m-\lambda_1) t} & \text{for } \lambda_m=\lambda_n,
  \end{matrix} \right.
\end{equation}
we write
\begin{equation}
    \frac{\mathcal{N}(t)}{e^{\lambda_1 t}} = \xi_0 \sum_{m,n=1}^N
    z_m w_n \bar{s}_{mn} \tilde{\phi}_{mn}(t).
\end{equation}
Noting that $\tilde{\phi}_{11}(t)=t$, $\tilde{\phi}_{m1}(t)$, $\tilde{\phi}_{1n}(t)$ contribute constant terms and all other $\tilde{\phi}_{m n}(t)$ only contribute terms that exponentially decay to $0$, we have
\begin{equation}\label{eq: N_case_1}
  \frac{\mathcal{N}(t)}{\xi_0 e^{\lambda_1t}}  
  = z_1 w_1 \bar{s}_{11}t + g_0 + \mathcal{N}_r(t)
\end{equation}
where 
\begin{equation}
g_0 = \sum_{n=2}^N \frac{z_1 w_n \bar{s}_{1n}}{\lambda_1 - \lambda_n}  - \sum_{m = 2}^N \frac{z_m w_1 \bar{s}_{m1}}{\lambda_m - \lambda_1} 
\end{equation}
and the terms in $\mathcal{N}(t)$ that decay to zero as $t \rightarrow \infty$ are
\begin{multline}
  \mathcal{N}_r(t) = 
  \sum_{m,n=2}^N z_m w_n \bar{s}_{mn} \tilde{\phi}_{mn}(t)
  - \sum_{n = 2}^N \frac{z_1 w_n \bar{s}_{1n}}{\lambda_1 - \lambda_n}e^{(\lambda_n - \lambda_1)t} \\
   + \sum_{m = 2}^N \frac{z_m w_1 \bar{s}_{m1}}{\lambda_m - \lambda_1} e^{(\lambda_m - \lambda_1)t}.
\end{multline}
For the denominator we have
\begin{equation}\label{eq: D_case_1}
  \frac{\mathcal{D}(t)}{e^{\lambda_1 t}} = z_1 w_1 + \sum_{m=2}^N z_m w_m e^{(\lambda_m - \lambda_1)t} = z_1 w_1 + \mathcal{D}_r(t),
\end{equation}
where $\mathcal{D}_r(t)$ likewise decays to zero. Now, for some $T > 0$, we have $|\mathcal{N}_r(t)| < N_0$ and $|\mathcal{D}_r(t)| < D_0$ where $N_0$ and $D_0$ are bounds at which the ratio of $N_0$ and $D_0$ to $z_1 w_1$ is negligible.  Finally, we have
\begin{equation}\label{eq:end_theorem_1}
\begin{split}
   s(\xi_0,t)  
   &= \frac{\xi_0
    \left(w_1 z_1 \bar{s}_{11} t + g_0 + \mathcal{N}_r(t)\right)
   }{z_1 w_1+\mathcal{D}_r(t)}\\
  &= \xi_0 \bar{s}_{11} t +R(t),
\end{split}
\end{equation}
where
\begin{equation}
  R(t) = \frac{\xi_0 \left(g_0 + \mathcal{N}_r(t)-\mathcal{D}_r(t)\bar{s}_{11}t\right)}{z_1w_1+\mathcal{D}_r(t)}.
\end{equation}
Since $\lim_{t\to \infty}R(t)$ is finite, if $\bar{s}_{11}\neq 0$ then $\xi_0\bar{s}_{11}t$ is the dominant term  of $s(\xi_0,t)$ as $t \to \infty$.
\end{proof}

\begin{corollary} If $A_0$ is diagonalizable with dominant, real eigenvalue $\lambda_1 \leq 0$ with equal algebraic and geometric multiplicity $\ell>1$, $s(\xi_0,t) = \xi_0 (a_0/b_0) t + R(t)$, where $R(t)$ remains finite, i.e., for $a_0\neq 0$, $s(\xi_0,t)$ again diverges linearly as $t \rightarrow \infty$ with slope $\xi_0 a_0/b_0$, where $a_0$ and $b_0$ are given by a linear combination of the coefficients associated with the dominant, repeated eigenvalue $\lambda_1$.
\end{corollary}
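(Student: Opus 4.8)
The plan is to run the argument of Theorem~\ref{thm: linear_div} essentially unchanged, the only difference being that a whole block of terms, rather than a single term, survives in the limit. Relabel the eigenvalues so that $\lambda_1=\lambda_2=\cdots=\lambda_\ell$ is the $\ell$-fold dominant eigenvalue and $\Re(\lambda_n)<\Re(\lambda_1)$ for $n>\ell$. Because $A_0$ is still diagonalizable (algebraic $=$ geometric multiplicity), $\mathcal{N}(t)$ and $\mathcal{D}(t)$ retain the forms~\eqref{eq: N(t)} and~\eqref{eq: D(t)} with $\phi_{mn}$ as in~\eqref{eq: phi}; the only subtlety is that $\phi_{mn}(t)=te^{\lambda_1 t}$ now occurs for every pair $m,n\in\{1,\dots,\ell\}$, not just $m=n=1$.

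First I would factor $e^{\lambda_1 t}$ out of $\mathcal{N}(t)$ and $\mathcal{D}(t)$ and sort the rescaled functions $\tilde\phi_{mn}(t)$ of~\eqref{eq: phi_tilde} into three groups: (i) $m,n\le\ell$, where $\lambda_m=\lambda_n=\lambda_1$ gives $\tilde\phi_{mn}(t)=t$; (ii) exactly one of $m,n$ is $\le\ell$, where $\tilde\phi_{mn}(t)$ equals a constant plus an exponentially decaying term (the growing exponential is killed because the dominant index contributes $e^{(\lambda_1-\lambda_1)t}=1$); and (iii) $m,n>\ell$, where $\tilde\phi_{mn}(t)\to 0$, including the $te^{(\lambda_m-\lambda_1)t}$ terms that appear when two non-dominant eigenvalues coincide. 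Collecting group (i) yields $\mathcal{N}(t)/(\xi_0 e^{\lambda_1 t}) = a_0\,t + g_0 + \mathcal{N}_r(t)$ with $a_0=\sum_{m,n=1}^{\ell} z_m w_n \bar s_{mn}$, $g_0$ the constant assembled from group (ii), and $\mathcal{N}_r(t)\to 0$; and only the $m\le\ell$ terms of $\mathcal{D}(t)$ survive, giving $\mathcal{D}(t)/e^{\lambda_1 t} = b_0 + \mathcal{D}_r(t)$ with $b_0=\sum_{m=1}^{\ell} z_m w_m$ and $\mathcal{D}_r(t)\to 0$.

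Then, exactly as in~\eqref{eq:end_theorem_1}, assuming $b_0\neq 0$ I would write
\begin{equation*}
  s(\xi_0,t) = \frac{\xi_0\bigl(a_0 t + g_0 + \mathcal{N}_r(t)\bigr)}{b_0 + \mathcal{D}_r(t)} = \xi_0\frac{a_0}{b_0}\,t + R(t),
  \qquad
  R(t) = \frac{\xi_0\bigl(g_0 + \mathcal{N}_r(t) - (a_0/b_0)\,t\,\mathcal{D}_r(t)\bigr)}{b_0 + \mathcal{D}_r(t)},
\end{equation*}
and the same observation that a decaying exponential beats a linear factor ($t\,\mathcal{D}_r(t)\to 0$) shows $\lim_{t\to\infty}R(t)$ is finite, so $s(\xi_0,t)$ diverges linearly with slope $\xi_0 a_0/b_0$ whenever $a_0\neq 0$.

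The main point requiring care is the bookkeeping in group (ii): one must verify that every contribution there is genuinely constant or decaying, which is where the \emph{strict} inequality $\Re(\lambda_n)<\Re(\lambda_1)$ for $n>\ell$ is essential, and correctly assemble $g_0$ with its asymmetric signs (as in the $g_0$ of Theorem~\ref{thm: linear_div}). A secondary subtlety worth a brief remark is that although $z_m$, $w_m$, $\bar s_{mn}$ for $m,n\le\ell$ depend on the chosen basis of the $\ell$-dimensional dominant eigenspace, the combinations $a_0$ and $b_0$ are basis-independent (they are traces of the relevant operators restricted to that eigenspace), so the slope $\xi_0 a_0/b_0$ is well defined; one should also flag the standing hypothesis $b_0\neq 0$, since otherwise the leading behavior must be recomputed from the next-order terms.
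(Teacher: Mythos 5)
Your proposal is correct and follows essentially the same route as the paper's (much terser) proof: the same definitions $a_0=\sum_{m,n=1}^{\ell} z_m w_n \bar{s}_{mn}$ and $b_0=\sum_{m=1}^{\ell} z_m w_m$, the same factoring of $e^{\lambda_1 t}$ from $\mathcal{N}(t)$ and $\mathcal{D}(t)$, and the same rearrangement into $\xi_0(a_0/b_0)t + R(t)$ with $R(t)$ finite. Your added remarks on the implicit hypothesis $b_0\neq 0$ and the basis-independence of $a_0/b_0$ on the dominant eigenspace are sound refinements that the paper leaves unstated.
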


\begin{proof}
$\mathcal{N}(t)$ and $\mathcal{D}(t)$ follow from~\eqref{eq: N(t)} and~\eqref{eq: D(t)}.  Setting 
$a_0=\sum\limits_{m,n=1}^{\ell} z_m w_n \bar{s}_{mn}$, and
$b_0=\sum\limits_{m=1}^{\ell} z_m w_m$, we have
\begin{multline}\label{eq: diagonal_repeats}
  \mathcal{N}(t) = \xi_0 \left[a_0 t e^{\lambda_1 t} +  
  \sum\limits_{\substack{m = 1,n=\ell+1 \\ m=\ell+1,n=1}}^{N} z_m w_n \bar{s}_{m n} \phi_{mn}(t) \right],
\end{multline}
where the sum does not include repeats of the ordered pair $(m,n)$ and
\begin{equation}
  \mathcal{D}(t) =  b_0 e^{\lambda_1 t} +  \sum\limits_{m = \ell+1}^{N} z_m w_m e^{\lambda_{m} t}.
\end{equation}
Then~\eqref{eq:end_theorem_1} is modified as
\begin{equation}\label{eq:repeat}
    s(\xi_0,t) = \xi_0 (a_0/b_0) t + R(t), 
\end{equation}
where $R(t)$ again remains finite.
\end{proof}
\begin{remark}
Note that if $A_{0}$ has a real dominant eigenvalue $\lambda_1$ of matching algebraic and geometric multiplicity $m$, the theorem extends to repeated eigenvalues $\lambda_n \ne \lambda_1$  of arbitrary multiplicity.  Any terms that enter~\eqref{eq: N_case_1} and~\eqref{eq: D_case_1} generated by some $\lambda_n$ for $n\neq1$ necessarily have $\lvert \Re(\lambda_n) \rvert > \lvert \lambda_1 \rvert$ under the assumption of the dominant eigenvalue $\lambda_1$. If $\lambda_n$ is a simple root, upon factoring of $\lambda_1$ from $\mathcal{N}(t)$ and $\mathcal{D}(t)$, such terms are either constant or exhibit an exponential time dependence $e^{(\Re(\lambda_n)-\lambda_1)t} = e^{-\sigma_n t}$. 
If $\lambda_n$ is a repeated root with non-trivial Jordan block of multiplicity $\ell$ these terms take the form $ t^{\ell-1}e^{-\sigma_n t}$. In either case, such terms $\rightarrow 0$ as $t \rightarrow \infty$, are subsumed in $\mathcal{N}_r(t)$ and $\mathcal{D}_r(t)$ and grouped into $R(t)$. Any constant terms contributing to $\mathcal{N}(t)$, remain finite as $t \rightarrow \infty$ and are also included in $R(t)$. The result of the theorem is thus unchanged. 
\end{remark}

We now consider damped, complex conjugate eigenvalues. In the remainder of the paper $j$ is the imaginary unit. 
\begin{theorem}\label{thm: periodic}
If the dominant eigenvalue of $A_0 = A_1 - bk + S \xi_0$ appears in a complex conjugate pair $- \sigma \pm j \omega$, $\sigma \geq 0$, then $s(\xi_0,t) = (t f(t) + g(t) +\mathcal{N}_r(t))/(h(t) + \mathcal{D}_r(t))$ where $f(t)$, $g(t)$, $h(t)$ are periodic functions with period $\pi/\omega$ and $\mathcal{N}_r(t)$, $\mathcal{D}_r(t) \rightarrow 0$ as $t \rightarrow \infty$ with rate given by $(\Re(\lambda_3) + \sigma)$. 
Thus, $s(\xi_0,t)$ has no limit as $t \rightarrow \infty$, periodically taking divergingly large local maxima and local minima.
\end{theorem}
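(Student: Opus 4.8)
The plan is to mirror the proof of Theorem~\ref{thm: linear_div} almost verbatim, the only structural changes being the exponential factored out of $\mathcal{N}(t)$ and $\mathcal{D}(t)$ and the appearance of an undamped oscillation in the rescaled kernels. Write the dominant conjugate pair as $\lambda_1 = -\sigma + j\omega$ and $\lambda_2 = \overline{\lambda_1} = -\sigma - j\omega$, both with $\Re(\lambda_{1,2}) = -\sigma = \max_n \Re(\lambda_n)$, so that every remaining eigenvalue obeys $\Re(\lambda_n) + \sigma < 0$. Because $A_0$ is real, the (generalized) eigenvectors of this pair may be chosen as complex conjugates, which gives the relations $z_2 = \overline{z_1}$, $w_2 = \overline{w_1}$ and $\bar{s}_{22} = \overline{\bar{s}_{11}}$; I would record these first, since they are exactly what forces the leading terms to assemble into periodic functions whose products and ratios are real.

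Next I would factor $e^{\lambda_1 t}$ from $\mathcal{N}(t)$ in~\eqref{eq: N(t)} and $\mathcal{D}(t)$ in~\eqref{eq: D(t)} as in~\eqref{eq: N_case_1}--\eqref{eq: D_case_1}, and sort the rescaled kernels $\tilde{\phi}_{mn}(t) = \phi_{mn}(t)/e^{\lambda_1 t}$ into three classes: (i) $\tilde{\phi}_{11}(t) = t$ and $\tilde{\phi}_{22}(t) = t\,e^{(\lambda_2 - \lambda_1)t} = t\,e^{-2j\omega t}$, which produce the linear-in-$t$ term with coefficient $f(t) = \xi_0\bigl(z_1 w_1 \bar{s}_{11} + z_2 w_2 \bar{s}_{22}\,e^{-2j\omega t}\bigr)$; (ii) $\tilde{\phi}_{12}$, $\tilde{\phi}_{1n}$ and $\tilde{\phi}_{2n}$ for $n\ge 3$, whose limits as $t\to\infty$ are constants or pure $e^{-2j\omega t}$ oscillations and so contribute a bounded term $g(t)$; and (iii) all $\tilde{\phi}_{mn}$ with $m,n\ge 3$, which decay. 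The crucial observation is that the only surviving oscillation after this rescaling is $e^{(\lambda_2 - \lambda_1)t} = e^{-2j\omega t}$, of period $\pi/\omega$; hence $f$, $g$ and $h(t) = z_1 w_1 + z_2 w_2 e^{-2j\omega t}$ are periodic with period $\pi/\omega$, the remainders $\mathcal{N}_r(t)$ and $\mathcal{D}_r(t)$ collect the class-(iii) terms and vanish at the rate $e^{(\Re(\lambda_3)+\sigma)t}$ set by the next eigenvalue, and cancelling $e^{\lambda_1 t}$ in $\mathcal{N}/\mathcal{D}$ gives the stated form $s(\xi_0,t) = (t f(t) + g(t) + \mathcal{N}_r(t))/(h(t) + \mathcal{D}_r(t))$.

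Finally I would argue non-convergence. Since $\mathcal{D}(t)$ is proportional to $e(t)$ and, for large $t$, the decaying modes are dominated by the undamped pair, $\mathcal{D}(t)/e^{\lambda_1 t} = h(t) + \mathcal{D}_r(t)$ has infinitely many simple zeros accumulating at infinity, spaced asymptotically $\pi/\omega$ apart: they are the zeros $t_k$ of $h$, each displaced by $o(1)$ once $|\mathcal{D}_r|$ is small. At such a zero one has $z_2 w_2\,e^{-2j\omega t_k}/(z_1 w_1) = -1$, whence $f(t_k) = 2j\,\xi_0 z_1 w_1\,\Im(\bar{s}_{11})$, a fixed number that is nonzero precisely when $\bar{s}_{11}\notin\mathbb{R}$; then near each zero the numerator is $\approx \hat{t}_k f(\hat{t}_k)$, which is nonzero, of definite sign, and grows like $\hat{t}_k$, while the denominator changes sign through $0$. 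Hence $s(\xi_0,t)$ exhibits a $+\infty$-type local maximum on one side of each zero and a $-\infty$-type local minimum on the other, spaced $\pi/\omega$ apart, so the limit does not exist. The step I expect to be the real work is this last one --- locating the zeros of $h + \mathcal{D}_r$ and showing the numerator survives there, which is what pins down the genericity hypothesis $\bar{s}_{11}\notin\mathbb{R}$ (and in the degenerate case $f\equiv 0$ one falls back to $s(\xi_0,t)\approx g(t)/h(t)$, still periodic with poles at the zeros of $h$, so the conclusion persists); everything before it is routine bookkeeping paralleling Theorem~\ref{thm: linear_div}.
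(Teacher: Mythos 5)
Your proposal is correct and follows essentially the same route as the paper's proof: factor the dominant exponential out of $\mathcal{N}(t)$ and $\mathcal{D}(t)$, split the result into a $t$-growing periodic part, a bounded periodic part, and exponentially decaying remainders with rate $\Re(\lambda_3)+\sigma$, then use the conjugacy relations $z_2=z_1^*$, $w_2=w_1^*$ to locate the $\pi/\omega$-spaced (near-)zeros of the denominator where the spikes occur. The only differences are cosmetic --- you factor $e^{\lambda_1 t}$ rather than $e^{-\sigma t}$, which makes your $f$, $g$, $h$ literally $\pi/\omega$-periodic --- plus your explicit verification that the numerator survives at the zeros of $h$ via $f(t_k)=2j\xi_0 z_1 w_1 \Im(\bar{s}_{11})$, a genericity point the paper leaves implicit (note only that when $\Im(\bar{s}_{11})=0$ one gets $f=\xi_0\bar{s}_{11}h$, not $f\equiv 0$, so the leading behavior reverts to linear divergence plus the periodic poles from $g/h$).
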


\begin{proof} 
Following the same procedure as for Theorem~\ref{thm: linear_div}, denote the dominant complex eigenvalue pair as $\lambda_{1,2} = - \sigma \pm j\omega$.  Factoring the real part of the dominant pole-pair gives
\begin{multline}\label{eq:complex}
 \mathcal{N}(t)/ \xi_0 e^{- \sigma t} = t z_1 w_1 \bar{s}_{11} e^{-j\omega t} + t z_2 w_2 \bar{s}_{22}e^{j \omega t}
   \\ + \sum_{\substack{(m,n) \neq (1,1) \\ (m,n) \neq (2,2)}} z_m w_m \bar{s}_{mn} \hat{\phi}_{mn}(t),
\end{multline}
where 
\begin{equation}\label{eq: phi_hat}
  \hat{\phi}_{mn}(t) = \left\{\begin{matrix} 
    \frac{e^{(\lambda_m+\sigma) t} - e^{(\lambda_n + \sigma)t}}{\lambda_m - \lambda_n} & \text{for } \lambda_m \neq \lambda_n, \\[.5ex]
    t e^{(\lambda_m+\sigma) t} & \text{for } \lambda_m=\lambda_n.
  \end{matrix} \right.
\end{equation}
The last term in~\eqref{eq:complex} generates $2N - 4$ terms of the form 
\begin{equation}
\frac{z_{(1,2)} w_n \bar{s}_{(1,2),n}}{\lambda_{(1,2)} - \lambda_n} ( e^{\pm j\omega t} - e^{(\lambda_n + \sigma)t})
\end{equation}
and $2N-4$ terms of the form 
\begin{equation}
\frac{z_{m} w_{(1,2)} \bar{s}_{m,(1,2)}}{\lambda_{m} -  \lambda_{(1,2)} } (  e^{(\lambda_m + \sigma)t} - e^{\pm j\omega t})
\end{equation}
along with the pair 
\begin{equation}
\frac{z_1 w_2 \bar{s}_{12}}{\lambda_1 - \lambda_2}(e^{-j \omega t} -e^{j \omega t}), \quad \frac{z_2 w_1 \bar{s}_{21}}{\lambda_2 - \lambda_1} (e^{j \omega t} - e^{- j \omega t}).
\end{equation}
Recalling that 
$\Re(\lambda_\ell) <-\sigma$, 
$\forall \ell \neq 1,2$, we rewrite the last term of Eq.~\eqref{eq:complex} as
\begin{multline}\label{eq: complex_residue}
   \left( \frac{z_1 w_2 \bar{s}_{12}}{\lambda_1 - \lambda_2} 
  - \frac{z_2 w_1 \bar{s}_{21}}{\lambda_2 - \lambda_1} \right)  (e^{- j \omega t} - e^{j \omega t}) + \\ 
   \left( \sum_{n = 3}^N \frac{z_1 w_n \bar{s}_{1n}}{\lambda_1 - \lambda_n}
  - \sum_{m = 3}^N \frac{z_m w_1 \bar{s}_{m1}}{\lambda_m - \lambda_1}\right) e^{-j \omega t} + \\   \left( \sum_{n = 3}^N \frac{z_2 w_n \bar{s}_{2n}}{\lambda_2 - \lambda_n} - \sum_{m=3}^N \frac{z_m w_2 \bar{s}_{m2}}{\lambda_m - \lambda_2} \right)e^{j \omega t} \mathcal + {N}_r(t),
\end{multline}
where $\mathcal{N}_r(t)$ contains all terms that decay to zero as $t \rightarrow \infty$.   
Recalling that the eigenvalues are ordered in decreasing value of $\Re(\lambda_k)$, we note that the dominant term in $\mathcal{N}_r(t)$ goes to zero as $e^{(\lambda_3 + \sigma)t}$. Regrouping terms that do not decay to zero in~\eqref{eq:complex} and~\eqref{eq: complex_residue}, yields
\begin{multline}
  tz_1w_1\bar{s}_{11}e^{-j\omega t} +  tz_2w_2 \bar{s}_{22}e^{j \omega t} \\
   + \left( \frac{z_1 w_2 \bar{s}_{12}}{\lambda_1 - \lambda_2} - \frac{z_2 w_1 \bar{s}_{21}}{\lambda_2 - \lambda_1} + \sum_{n = 3}^N \frac{z_1 w_n \bar{s}_{1n}}{\lambda_1 - \lambda_n} \right. \\ 
  \left. -\sum_{m = 3}^N \frac{z_m w_1 \bar{s}_{m1}}{\lambda_m - \lambda_1} \right)  e^{-j \omega t} + \left( \frac{z_2 w_1 \bar{s}_{21}}{\lambda_2 - \lambda_1} -\frac{z_1 w_2 \bar{s}_{12}}{\lambda_1 - \lambda_2}   \right. \\ 
  +\left. \sum_{n = 3}^N \frac{z_2 w_n \bar{s}_{2n}}{\lambda_2 - \lambda_n} - \sum_{m=3}^N \frac{z_m w_2 \bar{s}_{m2}}{\lambda_m - \lambda_2} \right) e^{j \omega t} \\ = t z_1 w_1 \bar{s}_{11} e^{-j\omega t} + t z_2 w_2 \bar{s}_{22}e^{j \omega t} + Q e^{-j \omega t} + Re^{j \omega t} \\ 
  =: tf(t) + g(t).~~~~~~~~~~~~~~~~~~~~~~~~~~~~~~~~~~~~~~~~
\end{multline}

In the denominator, following the factoring of $e^{-\sigma t}$, we have
\begin{equation}\label{eq:complex2}\begin{aligned}
  \frac{\mathcal{D}(t)}{e^{-\sigma t}} 
  &= z_1 w_1 e^{-j \omega t} + z_2 w_2 e^{j \omega t} + \sum_{n = 3}^N z_n w_n e^{(\lambda_n + \sigma)t}\\
  &= z_1 w_1 e^{-j \omega t} + z_2 w_2 e^{j \omega t} + \mathcal{D}_r(t)\\
  &= h(t) + D_{r}(t).
\end{aligned}\end{equation}
Here, $\mathcal{D}_r(t)$ denotes the terms in denominator that go to zero exponentially with dominant term $e^{(\lambda_3 + \sigma)t}$.  For $N=2$, the denominator is given exactly by the expression in~\eqref{eq:complex2} with $\mathcal{D}_r(t) = 0$.

To bound $\left| \mathcal{D}(t) / e^{-\sigma t} \right|$, note that $|\mathcal{D}_r(t)|$ achieves its maximum for $t \in [0,\pi/\omega)$. Furthermore, the complex exponential terms in $h(t)$ will vary between $\pm 2|z_1 w_1|$. Thus, the maximum value the denominator can attain is $\left| (2|z_1w_1|) + \max_{t\in[0,\pi/\omega)}\mathcal{D}_r(t) \right|$.

To analyze the behavior of the ratio of $\mathcal{N}(t)$ to $\mathcal{D}(t)$, we must examine the periodic behavior of $\mathcal{D}(t)$. Since $z_m$ and $w_n$ are, in general, complex, we must find where $|\mathcal{D}(t)| = 0$ or is a minimum. This yields the following condition for $|\mathcal{D}(t)/e^{-\sigma t}|$ as an asymptotic minimum:
\begin{multline}
  |z_1 w_1|^2 + |z_2 w_2|^2 
  + 2 \Re(z_1 w_1 z_2^* w_2^*) \cos(2 \omega t) \\
  - 2 \Im(z_1 w_1 z_2^* w_2^*) \sin(2 \omega t) = 0.
\end{multline}
Recalling that $v_1$ and $v_2$ (the eigenvectors associated with the dominant complex pole pair) are complex conjugates, we have $z_1 = z_2^*$ and $w_1 = w_2^*$. Combining the trigonometric functions to a single cosine term yields the equivalent, simplified, condition for the minimum of $\mathcal{D}(t)$:
\begin{equation}
  \cos \left( 2 \omega t -  \phi_{01} \right) = -1,
\end{equation}
where 
\begin{equation}
  \phi_{01} := \begin{cases}
    \phi, & \Re(z_1 w_1 z_2^* w_2^*) > 0, \\
    \phi + \pi, & \Re(z_1 w_1 z_2^* w_2^*) < 0,
  \end{cases}
\end{equation}
and
\begin{equation}
\phi := \arctan\left(\frac{-\Im(z_1 w_1 z_2^* w_2^*)}{\Re(z_1 w_1 z_2^* w_2^*)}\right)
\end{equation}
with $\arctan$ \emph{defined} to be in the first or fourth quadrant. We thus, expect the denominator to approach zero cyclically with a period $T = \pi/\omega$. 

Thus, $|\mathcal{D}(t)|$ remains bounded from above, but approaches zero periodically, which produces ``spikes'' in the log-sensitivity characterized by 
\begin{equation}
  |s(\xi_0,t_n) | = \left| \frac{ \mathcal{N}(t_n)}{\mathcal{D}(t_n)} \right| = 
  \left| \frac{t_nf(t_n) + g(t_n) + \mathcal{N}_r(t_n)}{h(t_n) + \mathcal{D}_r(t_n)} \right|,
\end{equation}
where $t_n = t_0 + n \pi/\omega$ and $t_0$ is the first time at which $\mathcal{D}(t)$ achieves a local minimum. In the case of $N=2$, this is given exactly by $t_0 = (\pi + \phi_{01})/(2\omega)$.
\end{proof}
\begin{corollary}\label{cor: corollary_2} If $A_0$ contains $m$ eigenvalues of the form $\lambda_m = \sigma \pm j \omega_m$ with $\sigma = \min\limits_{\lambda_n}\lvert\Re{ \lambda_n \rvert}$ and $\omega_m = m\omega_0$ (i.e. the eigenfrequencies are commensurate) the $\omega$ of Theorem~\ref{thm: periodic} determining the periodic behavior of $s(\xi_0,t)$ is {$\omega_0$}.
$\blacksquare$
\end{corollary}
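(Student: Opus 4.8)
The plan is to rerun the proof of Theorem~\ref{thm: periodic} essentially verbatim; the only genuinely new ingredient is a bookkeeping observation about which complex exponentials survive once the dominant decay rate $e^{-\sigma t}$ is divided out. Label the dominant eigenvalues as the $m$ conjugate pairs $\lambda_{2k-1},\lambda_{2k}=-\sigma\pm j\omega_k$ with $\omega_k=k\omega_0$, $k=1,\dots,m$; by the hypothesis $\sigma=\min_n\lvert\Re\lambda_n\rvert$ these are the only eigenvalues with real part $-\sigma$, they are pairwise distinct, and so off the diagonal they always fall in the $\lambda_m\ne\lambda_n$ branch of~\eqref{eq: phi}. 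First I would substitute these eigenvalues into the expansions~\eqref{eq: N(t)}--\eqref{eq: D(t)} and factor out $e^{-\sigma t}$ exactly as in~\eqref{eq:complex}--\eqref{eq:complex2}. After the factoring, a term whose two indices are both dominant contributes $t\,e^{(\lambda_p+\sigma)t}$ (on the diagonal) or $(\lambda_m-\lambda_n)^{-1}\bigl(e^{(\lambda_m+\sigma)t}-e^{(\lambda_n+\sigma)t}\bigr)$ (off the diagonal), both of which oscillate at one of the $\omega_k$; a term with exactly one dominant index contributes a bounded multiple of some $e^{\pm j\omega_k t}$ plus pieces $\propto e^{(\lambda_\ell+\sigma)t}$ that decay; and a term with no dominant index decays outright. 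Gathering the decaying pieces into $\mathcal{N}_r(t),\mathcal{D}_r(t)\to0$ --- whose slowest component is $e^{(\Re(\lambda_{2m+1})+\sigma)t}$ --- reproduces exactly the form of Theorem~\ref{thm: periodic}, namely $s(\xi_0,t)=\bigl(tf(t)+g(t)+\mathcal{N}_r(t)\bigr)/\bigl(h(t)+\mathcal{D}_r(t)\bigr)$, with $f(t)=\sum_{p=1}^{2m}z_pw_p\bar s_{pp}\,e^{(\lambda_p+\sigma)t}$ and $h(t)=\sum_{p=1}^{2m}z_pw_p\,e^{(\lambda_p+\sigma)t}$ the $m$-pair analogues of the functions in Theorem~\ref{thm: periodic}, and $g(t)$ collecting the bounded oscillatory cross-terms.

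The key step is then to read off the period of $f$, $g$, $h$. Every exponent occurring in $f$, $g$, $h$ is $jp\omega_0 t$ for an integer $p$ with $\lvert p\rvert\le m$, and every exponent occurring in $|h(t)|^2$ --- which controls the near-zeros of $|\mathcal{D}(t)|$, i.e.\ the spike times $t_n$ --- is $jp\omega_0 t$ with $p$ a pairwise sum or difference of the $\omega_k/\omega_0$, again an integer. Hence $f$, $g$, $h$ and $|h|$ are all periodic with period $2\pi/\omega_0$ and common (fundamental) frequency $\omega_0$; moreover, since the dominant eigenvectors come in conjugate pairs, $h$ is a mean-zero real trigonometric polynomial, so it still has genuine zeros and $|\mathcal{D}(t)|=|h(t)+\mathcal{D}_r(t)|$ dips toward $0$ with this period, reproducing the spike mechanism of Theorem~\ref{thm: periodic}. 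In particular, since for $m\ge2$ the difference $k-l=1$ occurs, it is $\omega_0$ --- rather than $2\omega_0$, as in the single-pair case of Theorem~\ref{thm: periodic} --- that sets the recurrence of the minima of $|\mathcal{D}(t)|$, which is precisely the assertion that the $\omega$ of Theorem~\ref{thm: periodic} becomes $\omega_0$; for $m=1$ the statement collapses back to Theorem~\ref{thm: periodic}.

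I expect the main obstacle to be the non-degeneracy issue hidden in speaking of \emph{the} period: a priori an accidental cancellation among the dominant-subspace data $z_p$, $w_p$, $\bar s_{mn}$ could annihilate the $e^{\pm j\omega_0 t}$ Fourier component of $|h(t)|^2$, so that $|\mathcal{D}(t)|$ were actually $(2\pi/(r\omega_0))$-periodic for some integer $r>1$. The robust statement is thus that the governing frequency is always an integer multiple of $\omega_0$ and, for $m\ge2$, generically equals $\omega_0$; excluding the degenerate coincidences requires only a mild genericity assumption on the eigenstructure of $A_0$ restricted to the dominant subspace. A secondary, routine caveat: the corollary is stated for semisimple dominant pairs, and should a dominant eigenvalue carry a non-trivial Jordan block of size $\ell$, the prefactor $t$ in $tf(t)$ is replaced by a polynomial of degree $\ell-1$ exactly as in the Remark after Theorem~\ref{thm: linear_div}, leaving the $\omega_0$-periodicity conclusion intact.
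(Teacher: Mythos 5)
The paper gives no proof of this corollary (it is stated with a closing $\blacksquare$ as an immediate consequence of Theorem~\ref{thm: periodic}), and your argument is exactly the intended one: rerun the proof of Theorem~\ref{thm: periodic} with the $m$ commensurate dominant pairs, observe that every surviving exponent is an integer multiple of $j\omega_0 t$, and conclude that $f$, $g$, $h$ and the near-zero recurrence of $\lvert\mathcal{D}(t)\rvert$ are governed by the fundamental frequency $\omega_0$. Your two caveats --- that an accidental cancellation could make the true fundamental period a proper divisor of $2\pi/\omega_0$ (so the clean statement is ``generically $\omega_0$''), and the factor-of-two shift relative to the single-pair case where only $2\omega$ appears in $\lvert h\rvert^2$ --- are both genuine subtleties that the paper's bare statement glosses over, and you resolve them correctly.
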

\begin{remark}
If the $\{ \omega_m \}$ of Corollary~\ref{cor: corollary_2} are rationally independent so that $\sum_{n}\beta_n\omega_n = 0 \Rightarrow \beta_n = 0, \ \forall n$ the quasi-periodic behavior of $s(\xi_0,t)$ of Theorem~\ref{thm: periodic} is non-trivial and determined by expansion of all purely oscillatory terms of~\eqref{eq:complex} in 
$\omega_m$.
\end{remark}
\begin{theorem}
If $A_0$ has algebraic multiplicity $\ell$ in the dominant eigenvalue $\lambda_1$ with geometric multiplicity $1$, $s(\xi_0,t)$ diverges as a polynomial $F(t) = \left( \sum\limits_{n = 0}^{\ell} f_{n}(t) t^n \right) / \left(\frac{z_1 w_\ell}{(\ell-1)!} \right)$ as $t \rightarrow \infty$. 
\end{theorem}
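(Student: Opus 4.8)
The plan is to run the same machinery as in the proofs of Theorems~\ref{thm: linear_div} and~\ref{thm: periodic}, but now with a single nontrivial Jordan block at the dominant eigenvalue. Since $A_0$ has algebraic multiplicity $\ell$ and geometric multiplicity $1$ at $\lambda_1$, write the Jordan form as $J = (\lambda_1 I_\ell + N)\oplus J'$, where $N$ is the $\ell\times\ell$ nilpotent shift and $J'$ collects the remaining Jordan blocks, all of whose eigenvalues satisfy $\Re(\lambda_n) < \lambda_1$ (I take $\lambda_1$ real, as in the statement; if it is complex its conjugate block is handled identically and the argument combines with that of Theorem~\ref{thm: periodic}, dressing the polynomial growth with an oscillatory prefactor). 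Using $e^{\tau J} = e^{\lambda_1\tau}e^{\tau N}\oplus e^{\tau J'}$ together with $\tfrac{\partial}{\partial\xi}e^{At} = M X(t) M^{-1}$ and the integral form of~\eqref{eq: matrix_deriv}, i.e. $X(t) = \int_0^t e^{(t-\tau)J}\bar S e^{\tau J}\,d\tau$ with $\bar S = M^{-1}SM$, and $\int_0^t (t-\tau)^a\tau^b\,d\tau = a!\,b!\,t^{a+b+1}/(a+b+1)!$, the dominant-block part of $X(t)$ has $(i,j)$ entry
\[
  e^{\lambda_1 t}\sum_{p\ge i,\ q\le j}\bar s_{pq}\,\frac{t^{\,p-i+j-q+1}}{(p-i+j-q+1)!}.
\]
The exponent $p-i+j-q+1$ is maximised, uniquely, in the $(1,\ell)$ entry at $p=\ell,\ q=1$, giving the single highest-order term $\bar s_{\ell 1}\,t^{2\ell-1}e^{\lambda_1 t}/(2\ell-1)!$. (This should also be cross-checked against the Appendix formula~\eqref{eq:non-diag}.)

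Next I would assemble $\mathcal{N}(t)$ and $\mathcal{D}(t)$ as in~\eqref{eq: N(t)}--\eqref{eq: D(t)} but with the Jordan structure. With $z = cM$ and $w = M^{-1}\beta$ as before, the dominant-block part of the denominator is $e^{\lambda_1 t}\sum_{1\le m\le n\le\ell} z_m w_n\, t^{n-m}/(n-m)!$, a degree-$(\ell-1)$ polynomial in $t$ times $e^{\lambda_1 t}$ with leading coefficient $z_1 w_\ell/(\ell-1)!$; the dominant-block part of the numerator is $\xi_0$ times a degree-$(2\ell-1)$ polynomial in $t$ times $e^{\lambda_1 t}$, whose leading coefficient is $\xi_0 z_1 w_\ell \bar s_{\ell 1}/(2\ell-1)!$ coming from the $(1,\ell)$ entry above. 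Everything else — the pure $J'\times J'$ contribution and the dominant/$J'$ cross contributions — either carries a factor $e^{(\lambda_n-\lambda_1)t}$ times a polynomial, or (for the cross terms, after evaluating the convergent $\int_0^\infty \tau^b e^{(\lambda_k-\lambda_1)\tau}\,d\tau$) contributes at most a degree-$(\ell-1)$ polynomial in $t$ times $e^{\lambda_1 t}$ plus exponentially decaying remainders. I then factor $t^{\ell-1}e^{\lambda_1 t}$ out of both $\mathcal{N}(t)$ and $\mathcal{D}(t)$: the denominator becomes $z_1 w_\ell/(\ell-1)! + \mathcal{D}_r(t)$ with $\mathcal{D}_r(t)\to 0$ (negative powers of $t$ and exponentially decaying terms), and the numerator becomes $\sum_{n=0}^{\ell} f_n(t)\,t^n$, where the $f_n(t)$ for $n\ge 1$ are the constant coefficients inherited from the dominant block and $f_0(t)$ gathers the genuine constant together with all negative powers of $t$ and all exponentially decaying terms, so each $f_n(t)$ has a finite limit as $t\to\infty$.

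Finally, assuming the generic nondegeneracy $z_1 w_\ell\neq 0$ (the analogue of $z_1 w_1\neq 0$ used implicitly in Theorem~\ref{thm: linear_div}), the denominator tends to the nonzero constant $z_1 w_\ell/(\ell-1)!$, hence
\[
  s(\xi_0,t) = \frac{\mathcal{N}(t)}{\mathcal{D}(t)} = \frac{\sum_{n=0}^{\ell} f_n(t)\, t^n}{z_1 w_\ell/(\ell-1)!}\,\bigl(1+o(1)\bigr) = F(t)\,\bigl(1+o(1)\bigr),
\]
which diverges as the degree-$\ell$ polynomial $F(t)$; its leading term is $\xi_0\,\bar s_{\ell 1}\,(\ell-1)!\, t^{\ell}/(2\ell-1)!$, which is nonzero precisely when $\bar s_{\ell 1}\neq 0$ (if $\bar s_{\ell 1}=0$ the degree drops and the same argument is applied to the next nonvanishing coefficient). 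I expect the main obstacle to be the bookkeeping for the non-diagonalizable $X(t)$: one must confirm from the entrywise formula above (and its Appendix counterpart~\eqref{eq:non-diag}) that no off-dominant or cross term reaches the power $t^{2\ell-1}$, and that after dividing by $t^{\ell-1}$ all such terms either vanish or stay bounded and are absorbable into $f_0(t)$ — the place where the eigenvalue ordering and the strict gap $\Re(\lambda_n)<\lambda_1$ for the non-dominant blocks do the work, exactly as in the remark following Theorem~\ref{thm: linear_div}.
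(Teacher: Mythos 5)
Your proposal is correct and follows essentially the same route as the paper's proof: Jordan-decompose $A_0$, use the integral formula for $\partial e^{At}/\partial\xi$, observe that the dominant-block contribution to $\mathcal{N}(t)$ reaches degree $2\ell-1$ (from the $(1,\ell)$ entry via $\bar{s}_{\ell 1}$) while $\mathcal{D}(t)$ reaches degree $\ell-1$ with leading coefficient $z_1 w_\ell/(\ell-1)!$, then factor $t^{\ell-1}e^{\lambda_1 t}$ from both and absorb all cross and off-dominant terms into decaying or bounded remainders. Your closed-form entrywise expression for the dominant block and the explicit leading term $\xi_0\,\bar{s}_{\ell 1}\,(\ell-1)!\,t^{\ell}/(2\ell-1)!$, together with the explicitly stated nondegeneracy conditions $z_1 w_\ell\neq 0$ and $\bar{s}_{\ell 1}\neq 0$, are sharper than what the paper records, but the underlying argument is the same.
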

\begin{proof}
Calculating the components of $\mathcal{N}(t) = \xi_0 zX(t)w $ from the results of the Appendix yields the following:

Since $X_1(t)$ is identical to the $X(t)$ of a diagonalizible matrix with $\ell$ repeated eigenvalues, $z X_1(t) w = e^{\lambda_1 t} \left( ta_{-\ell+2}(t) + a_{-\ell+1}(t) \right)$ where $a_{-\ell+2}(t)$ and $a_{-\ell+1}(t)$ are given by the terms in parentheses of~\eqref{eq: diagonal_repeats} multiplying $t$ or not, respectively, after factoring of $e^{\lambda_1 t}$.

Taking the products $z X_2(t) w$ and $z X_3(t) z$ updates~\eqref{eq:B int} and~\eqref{eq:C int} to sums of scalar products with $\bar{s}_{mr}z\Pi_{ms}w = \bar{s}_{mr}z_{m}w_{s}$ and $\bar{s}_{q m}z\Pi_{p m} w = \bar{s}_{q m}z_{p}w_{m}$. So,
\begin{align}
    z X_2(t) w &= e^{\lambda_1 t} \sum_{m = 0}^\ell b_{m-\ell+1}(t)t^m,\\
    z X_3(t) w &= e^{\lambda_1 t} \sum_{m = 0}^\ell c_{m - \ell +1}(t)t^m,
\end{align}
where $b_m(t)$ and $c_m(t)$ are composed of the terms in~\eqref{eq:B int} and~\eqref{eq:C int} grouped by like powers in $t$ after factoring of $e^{\lambda_1 t}$. Note that the largest power of $t$ in the polynomials $zX_2(t)w$ and $zX_3(t)w$ is $\ell$. Moreover,
\begin{equation}
    z X_4(t) w = e^{\lambda_1 t} \sum_{m = 3}^{2\ell-1} d_{m - \ell +1}(t)t^m,
\end{equation}
where $d_m(t)$ consists of those terms in powers of $t^m$ following the factoring of $\lambda_1$. As such,
\begin{multline}
  \frac{\mathcal{N}(t)}{\xi_0 e^{\lambda_1 t}} =  z\left( X_4(t) + X_3(t) + X_2(t) + X_1(t) \right) w \\
  =\sum_{m = 0}^{2\ell -1} t^m ( d_{m - \ell +1}(t) + c_{m-\ell+1}(t) + b_{m-\ell+1}(t) +  \\
  a_{m - \ell+1}(t) )  = \sum_{m=0}^{2\ell-1} t^m f_{m-\ell+1}(t)
\end{multline}
Note that for $m > \ell$, $f_m(t) = d_m(t)$. Also, since $\Re(\lambda_n) <\Re(\lambda_1) \le 0$,
$\forall n \neq 1$, $f_m(t)$ consists of two types of terms: (1) those that contain a factor of $e^{(\lambda_n - \lambda_1)t}$ and decay to zero and (2) terms, which are constant or purely oscillatory and do not decay to zero as $t \rightarrow \infty$.
    
The denominator has the more tractable expression
\begin{equation}
  \mathcal{D}(t) = \sum_{m=1}^N e^{\lambda_m t} z_m w_m + \sum\limits_{p=1}^{\ell-1}\sum\limits_{q=p+1}^\ell \frac{e^{\lambda_1 t}z_p w_q t^{(q-p)}}{(q-p)! }.
\end{equation}
The largest power of $t$ in $\mathcal{D}(t)$ is $\ell-1$ with coefficient $e^{\lambda_1 t}z_1 w_\ell/(\ell-1)!$.
Taking the ratio of $\mathcal{N}(t)/ \xi_0 \mathcal{D}(t)$ and cancelling common factors of $e^{\lambda_1 t}$ and $t^{\ell-1}$ yields
\begin{multline} \label{eq: polynomial_ratio}
 \frac{\sum\limits_{m=\ell-1}^{2\ell-1} f_{m - \ell +1}(t) t^{(m-\ell+1)} + \sum\limits_{m=0}^{\ell-2}f_{m-\ell+1}(t)t^{-(\ell-1-m)}}{ \left(\sum\limits_{n = 1}^N t^{1-\ell} z_n w_n e^{(\lambda_n - \lambda_1)t} +\sum\limits_{p=1}^{\ell-1} \sum\limits_{q=p+1}^\ell  \frac{z_p w_q}{(q-p)!}t^{(q-p+1-\ell)} \right) }\\ 
 = \frac{\sum\limits_{m=\ell-1}^{2\ell-1} f_{m-\ell+1}(t)t^{(m - \ell +1)} + \mathcal{O}(t^{-1})}{\frac{z_1 w_\ell}{(\ell-1)!} + \mathcal{O}(t^{-1})}
\end{multline}
where $\mathcal{O}(t^{-1}) \rightarrow 0 $ as $t \rightarrow \infty$ as $1/t$ or faster (i.e. with rate $t^{-n}$ or $e^{(\lambda_n - \lambda_1)t}$). Reindexing $m$ for clarity we have 
\begin{multline}
     s(\xi_0,t) = \xi_0 \frac{\mathcal{N}(t)}{\mathcal{D}(t)} = \xi_0 \frac{\sum\limits_{m=0}^\ell f_m(t)t^{m} + \mathcal{O}(t^{-1})}{\frac{z_1 w_\ell}{(\ell-1)!} + \mathcal{O}(t^{-1})}\\ = \xi_0 \frac{\sum\limits_{m=0}^{\ell}f_m(t) t^m}{\frac{z_1w_\ell}{(\ell - 1)!}} + R(t)
\end{multline}
where $R(t) \rightarrow 0$ as $t \rightarrow \infty$. Then, as $t \rightarrow \infty$, we have
\begin{equation}
  \frac{\mathcal{N}(t)}{\mathcal{D}(t)} \rightarrow \frac{\sum\limits_{m=0}^\ell f_m(t)t^m}{\frac{z_1 w_\ell}{(\ell-1)!}} = F(t),
\end{equation}
so that $s(\xi_0,t) = \xi_0 \mathcal{N}(t)/\mathcal{D}(t) \rightarrow \infty$ as a polynomial in $t^m$. Before concluding, note that we can lift the assumption of distinct eigenvalues for $\lambda_n \neq \lambda_1$. By assumption, $\lvert \Re \lambda_n \rvert > \lvert \Re \lambda_1 \rvert$ for all $n > 1$ so that any terms in~\eqref{eq: polynomial_ratio} generated by a Jordan block not associated with $\lambda_1$ decay as $e^{(\Re(\lambda_n) - \Re(\lambda_1))t}$ and are subsumed in $\mathcal{O}(t^{-1})$ leaving the result of the theorem unchanged. 
\end{proof}
\begin{remark}
Note that if the dominant eigenvalues are characterized by $\Re{\lambda_1} = 0$, the results of this section still hold. This is easily verified by noting that factoring of $e^{\lambda_1 t}$ does not change the leading terms of $\mathcal{N}(t)$ or $\mathcal{D}(t)$ and any remaining terms of the form $e^{\lambda_{k}t}$ for $k > 1$ decay to zero as $t \rightarrow \infty$ under the assumption of feedback stabilization from Section~\ref{prelim}.
\end{remark}

\section{Classical Example -- Spring-Mass System}\label{spring-mass}

We first examine the case of an undamped spring-mass system we wish to position at $x_{final} = \SI{1}{m}$ with an actuating force on the mass that provides the step-input. Taking the spring as the uncertain variable with nominal value of $k = \xi_0 = \SI{4}{N/m^2}$, the state-equation for the nominal system is: 
\begin{equation}
  \begin{bmatrix} \dot{x_1} \\ \dot{x_2} \end{bmatrix} = \begin{bmatrix}0 & 1 \\ -\xi & 0 \end{bmatrix} \begin{bmatrix} x_1 \\ x_2 \end{bmatrix} + \begin{bmatrix}0 \\ 1 \end{bmatrix} u.
\end{equation}
Here, $x_1$ is the mass position, and $x_2$ is the velocity. We choose $x_1$ as the measured output, so $c = \begin{bmatrix}1 & 0\end{bmatrix}$.

Variations about the nominal value of the spring constant enter the dynamics additively through the structure matrix as $(\Delta \xi) S$ where $S = \begin{bmatrix} 0 & 0 \\ 1 & 0 \end{bmatrix}$.

\subsection{Real Dominant Eigenvalue}

We first choose real, distinct eigenvalues for rapid convergence with no oscillation. Choosing $\lambda_1 = -2$ and $\lambda_2 = -5$ produces a step response with zero overshoot, rise time of $\SI{1.23}{s}$, and settling time of $\SI{2.21}{s}$. The resulting limiting behavior of $\left| s(\xi_0,t) \right|$ is shown in Fig.~\ref{fig:spring-mass1}. In accordance with Theorem~\ref{thm: linear_div}, the log-sensitivity diverges linearly with a slope given by $|\xi_0\bar{s}_{11}| = 4/3$.

\begin{figure}
\centering
\includegraphics[width=\columnwidth]{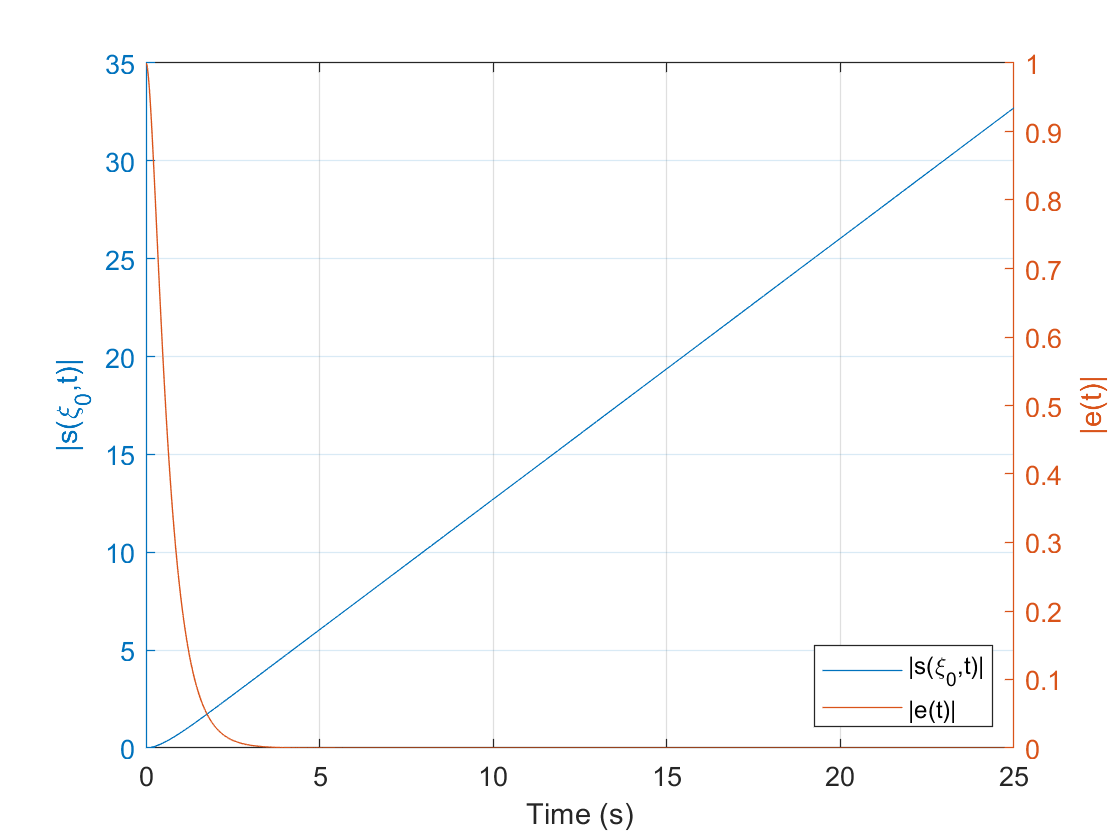}
\caption{Spring-mass system with $\lambda_1 = -2$, $\lambda_2=-5$, $\xi_0 = 4$, and $\bar{s}_{11} = -1/3$. Note the linear divergence of $\left| s(\xi_0,t) \right|$ with a slope of $4/3$.}
\label{fig:spring-mass1}
\end{figure}

\subsection{Complex Dominant Eigenvalue Pair}

We now choose eigenvalues of $-1 \pm j \pi/5$ to yield a system with lighter damping and oscillatory dynamics. This provides a more gentle response with an overshoot of $0.67$, rise time of $\SI{2.24}{s}$, and settling time of $\SI{3.52}{s}$. For the log-sensitivity of the error, we first note that $\Re({z_1 w_1 z_2^* w_2^*}) = -0.197$ and $\Im{(z_1 w_1 z_2^* w_2^*)} = -0.409$. Not considering the additional factor of $\pi$ in $\phi_{01}$ produces an erroneous first zero crossing time of $t_0 = \SI{1.61}{s}$. Taking into account the sign of $\Re({z_1 w_1 z_2^* w_2^*)}$ agrees with the expected periodic behavior. Specifically, $t_0 = (\pi + \phi_{01})/(2 \omega) = \SI{4.107}{s}$ with expected, asymptotic recurrence times of $t_n = t_0 + \left( \pi/\omega \right) n$ as stated in Theorem~\ref{thm: periodic}. The results are shown in Fig.~\ref{fig:spring-mass-complex-eig}. Note that the local maxima for $\left| s(\xi_0,t) \right|$ and local minima for $e(t)$ correspond to the values of $t_n$ predicted by Theorem~\ref{thm: periodic}. 

\begin{figure}
\centering
\includegraphics[width=\columnwidth]{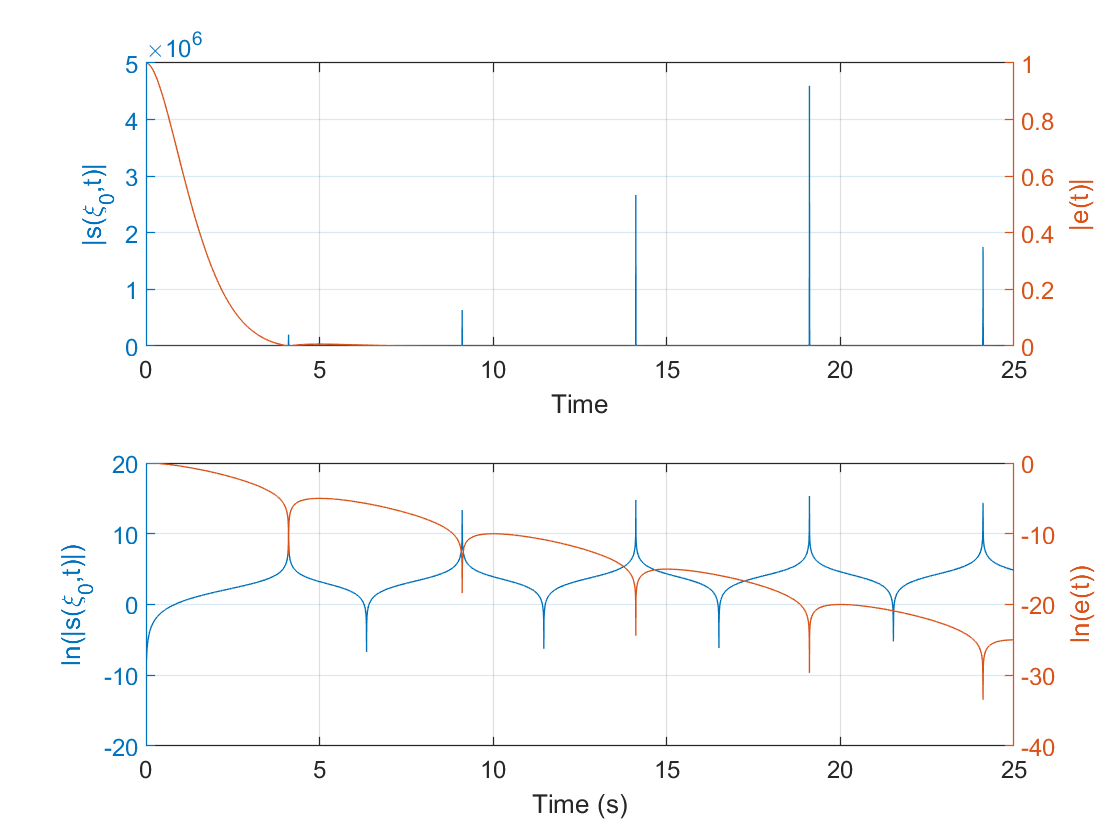}
\caption{Divergence of $\left|s(\xi_0,t)\right|$ for the spring-mass system with a complex eigenvalue pair at $s = -1 \pm j \pi/5$. The top plot shows both, $e(t)$ and $\left| s(\xi_0,t)\right|$, on a linear scale, and the bottom plot shows both on a log-scale. Note that $s(\xi_0,t)$ displays local maxima every $\pi/\omega = \SI{5}{s}$ as the error periodically goes to zero.}\label{fig:spring-mass-complex-eig}
\end{figure}

\section{Classical Example---RLC Circuit}\label{circuit}

Extending the procedure to a slightly more complex scenario, we consider an RLC circuit as depicted in Fig.~\ref{fig:circuit}.

\begin{figure}
\centering
\includegraphics[width=\columnwidth]{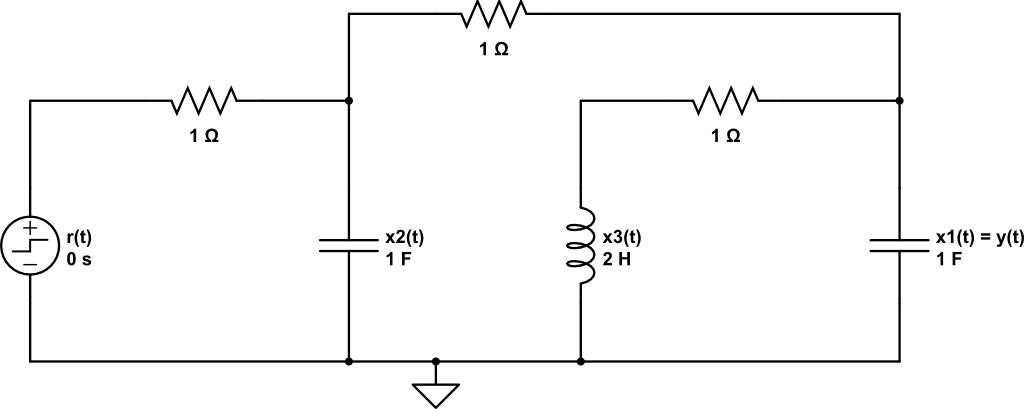}
\caption{RLC circuit with three states consisting of the two capacitor voltages and single inductor current. The input is a voltage step at $t=0$ and the output is the capacitor voltage $x_1(t)$ in the rightmost branch.}\label{fig:circuit}
\end{figure}

The voltage source provides a step input of $\SI{1}{V}$. The control objective is tracking this step input voltage at the capacitor voltage in the rightmost branch. The inductance in the system is the uncertain parameter with a nominal value of $\SI{2}{H}$. This provides the following state-space set-up
\begin{equation}
  \begin{bmatrix}\dot{x_1} \\ \dot{x_2} \\ \dot{x_3} \end{bmatrix} = \begin{bmatrix}-1 & 1 & -1 \\ 1 & -2 & 0 \\ \xi & 0 & -\xi \end{bmatrix} \begin{bmatrix} x_1 \\ x_2 \\ x_3 \end{bmatrix} + \begin{bmatrix} 0 \\ 1 \\ 0 \end{bmatrix} u.
\end{equation}
The nominal inverse inductance is $\xi_0 = 1/2$ and we have
\begin{equation}
  S = \begin{bmatrix} 0 & 0 & 0\\ 0 & 0 & 0 \\ 1 & 0 &-1\end{bmatrix}.
\end{equation}
The current through the inductor is taken as $x_3$ and the voltage across the output capacitor is $x_1$. Since we measure $x_1$ as the output, the output vector is $c = \begin{bmatrix} 1 & 0 & 0 \end{bmatrix}$.

\subsection{Real Dominant Eigenvalue}

We first consider real eigenvalues and a dominant eigenvalue of $\lambda_1 = -1$. The system response has a rapid rise time of $\SI{0.49}{s}$ and a large overshoot of $30 \%$. This overshoot is attributable to the negative residue of $\lambda_2 = -2$ which generates a non-monotonic convergence of $y(t)$ to the steady state $y_{\ss} = 1$.  The behavior of the log-sensitivity with time is shown in Fig~\ref{fig:circuit_real}. In accordance with Theorem~\ref{thm: linear_div}, the log-sensitivity diverges with slope $|\xi_0 \bar{s}_{11} | = \left| 0.5 (-3.167) \right| = 1.58$. Contrasted with this long-term behavior, we note a local maximum of $\left| s(t,\xi_o) \right|$ at $t = \SI{0.701}{ s}$ when the step response passes through $y(t) = 1$, attributable to the transient dynamics.

\begin{figure}
\centering
\includegraphics[width=\columnwidth]{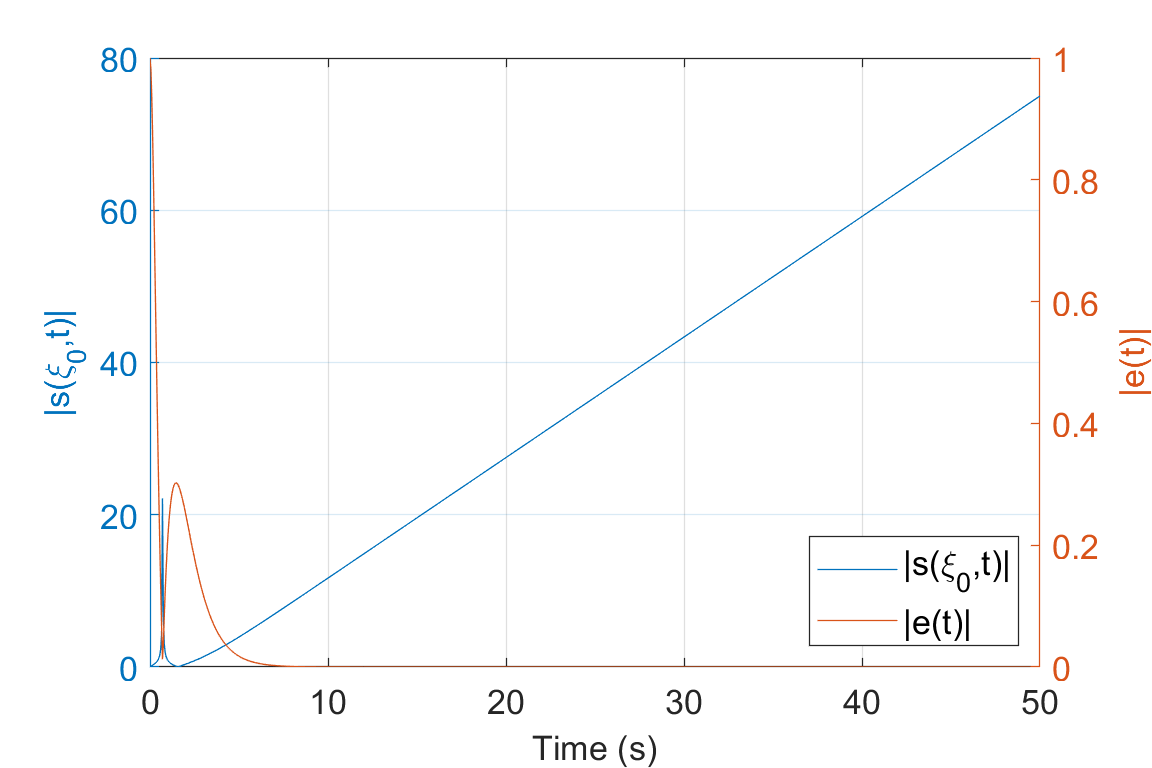}
\caption{Divergence of $\left|s(\xi_0,t)\right|$ for the third order circuit with $\lambda_1 = -1$, $\lambda_2 = -2$, and $\lambda_3 = -4$. As predicted, the log-sensitivity of the error diverges linearly with time.}\label{fig:circuit_real}
\end{figure}

\subsection{Complex Dominant Eigenvalue Pair}
\begin{figure}
\centering
\includegraphics[width=\columnwidth]{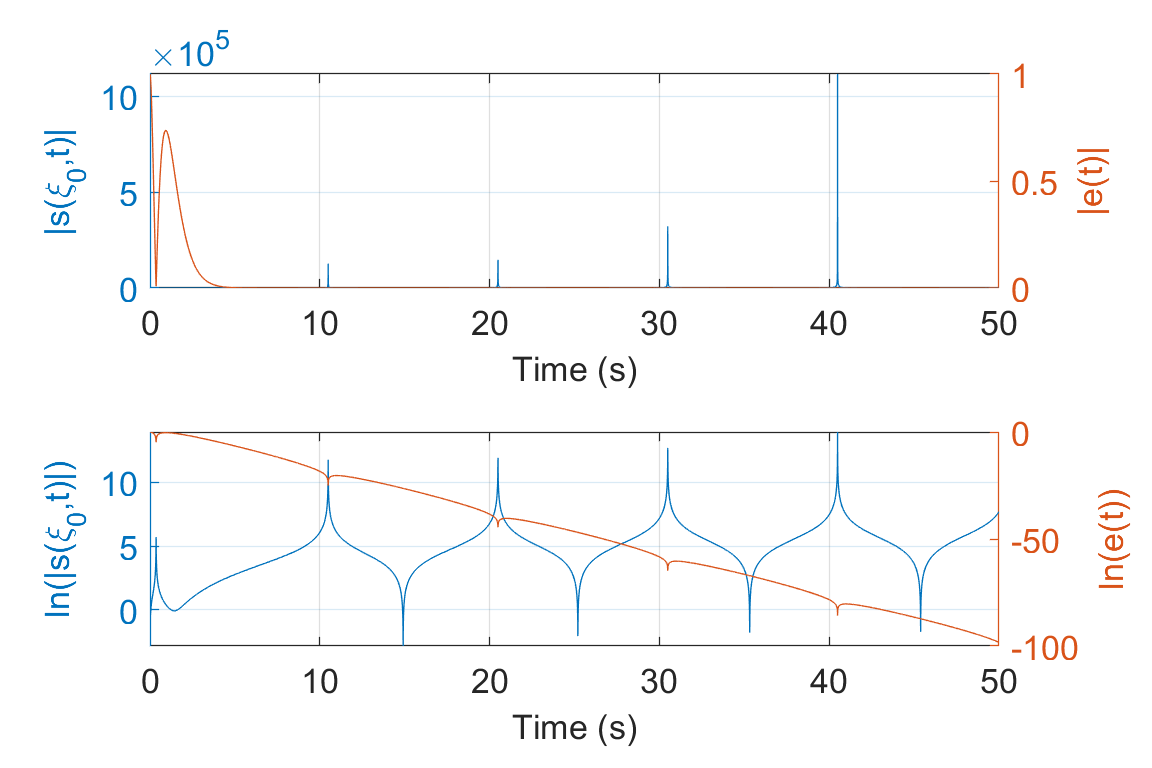}
\caption{$\left| s(\xi_0,t) \right|$ diverging over time for a third-order circuit with dominant complex eigenvalue pair $\lambda_{1,2} = -2 \pm j \pi/10$. The top panel displays $\left|s(\xi_0,t)\right|$ and $e(t)$ on a linear scale, and the lower panel displays the same on a log-scale. Note the periodic maxima of $\left|s(\xi_0,t)\right|$ and corresponding minima of $e(t)$ with period $\SI{10}{s}$.}\label{fig:circuit_complex}
\end{figure}

Next, we choose complex eigenvalues of $\lambda_{1,2} = -2 \pm j\pi/10$. As seen in Fig.~\ref{fig:circuit_complex}, $|s(\xi_0,t)|$ does not approach a limiting value as $t \rightarrow \infty$, but grows unbounded with periodic local maxima at a period of $\pi/\omega = \SI{10}{s}$, in accordance with Theorem~\ref{thm: periodic}. We also note a first spike in $|s(\xi_0,t)|$ at $t = \SI{0.350}{s}$ when $y(t)$ passes through $1$ during the transient response. The next local maximum occurs at the predicted time of $t_0 = (\pi + \phi_{01})/(2 \omega) = \SI{10.49}{s}$. The subsequent local maxima in $\left| s(\xi_0,t) \right|$ follow at the expected times of $t_n = \left( t_0 + (\pi/\omega) n \right) = (10.49 + 10n)\si{s}$.

\section{Open Quantum Systems Example -- Two Qubits in a Cavity}\label{qubit}

\subsection{System Description and Problem Formulation}

We now examine how the postulated long-term behavior of the log-sensitivity applies to non-classical systems. We consider a simple, open quantum system with a globally asymptotic steady state, as detailed in~\cite{Schirmer2022}. This asymptotic convergence facilitates similarity with the behavior of classical systems.

Consider two qubits (quantum mechanical two-level systems) collectively coupled to one another via a lossy cavity, as originally detailed in~\cite{motzoi}. As an open quantum system, the dynamics are governed by the time-dependent Liouville equation
\begin{equation}\label{eq:two-qubit}
    \frac{d}{dt} \rho(t) 
    = \left[ H,\rho(t) \right] + \mathcal{L}(V_\gamma) \rho(t),
\end{equation}
where the cavity has been adiabatically eliminated~\cite{motzoi}, $H$ is the Hamiltonian that determines the evolution of the system, $V_\gamma$ a constant dissipation operator, $\rho(t)$ the density operator that encodes the state information, and $[\cdot,\cdot]$ the matrix commutator. For this specific two-level system, we have~\cite{Schirmer2022}
\begin{equation}
  H = \begin{bmatrix} 0 & \alpha_2 & \alpha_1 & 0 \\ \alpha_2^* & \Delta_2 & 0 & \alpha_1 \\ \alpha_1^* & 0 & \Delta_1 & \alpha_2 \\ 0 & \alpha_1^* & \alpha_2^* & \Delta_1 + \Delta_2 \end{bmatrix}, \
  V_\gamma = \begin{bmatrix} 0 & \gamma_2 & \gamma_1 & 0 \\ 0 & 0 & 0 & \gamma_1 \\ 0 & 0 & 0 & \gamma_2 \\ 0 & 0 & 0 & 0 \end{bmatrix}.
\end{equation}
The terms $\alpha_1$ and $\alpha_2$ represent the driving fields of the qubits, and $\Delta_1$ and $\Delta_2$ represent the detuning parameters, i.e., the difference between the driving field frequency and the qubit resonance frequency for qubits $1$ and $2$, respectively. The terms $\gamma_1$ and $\gamma_2$ in the matrix $V_\gamma$ provide the strength of the decoherence acting on the first and second qubit, respectively. We take the nominal values of $\alpha_n$ and $\gamma_n$ as $1$, $\Delta_1$ as $-0.1$, and $\Delta_2$ as $0.1$.

We consider the following perturbations to these parameters in accordance with~\cite{Schirmer2022} and the associated structure matrices where $\delta_{mn}$ denotes a $4 \times 4$ matrix with a one in the $\left( m, n \right)$ location and zeros elsewhere:
\begin{itemize}
\item Perturbations to $\alpha_1$ with $S_1 =\delta_{13} + \delta_{31} + \delta_{24} + \delta_{42}$,
\item Perturbations to $\alpha_2$ with $S_2 = \delta_{12} + \delta_{21} + \delta_{34} + \delta_{43}$,
\item Perturbations to $\Delta_1$ with $S_3 = \delta_{33} + \delta_{44}$,
\item Perturbations to $\Delta_2$ with $S_4 = \delta_{22} + \delta_{44}$.
\end{itemize}
The equations of motion do not readily lend themselves to analysis in the common state-space formalism, but we can use the Bloch formulation to accomplish this. We choose an orthonormal basis of the $N \times N$ Hermitian matrices $\{\sigma_{n}\}$ where the first $N^2-1$ elements are traceless, and $\sigma_{N^2} = (1/\sqrt{N})I_{N^2}$ with $N$ the dimension of the system. We define
\begin{subequations}\begin{align}
  A_{mn} &= \Tr(jH[\sigma_m,\sigma_n]), \label{eq:bloch1}\\
  L_{mn} &= \Tr(V_\gamma^\dagger \sigma_m V_\gamma \sigma_n - \frac{1}{2} V_\gamma^\dagger V_\gamma \{\sigma_m,\sigma_n \} ), \label{eq:block2}\\
  r_m(t) &= \Tr(\sigma_m \rho(t)). \label{eq:bloch3}
\end{align}\end{subequations}
With $N=4$, this yields in $\dot{r}(t) = (A +L)r(t)$ with $A,L \in \mathbb{R}^{16 \times 16}$ and $r(t) \in \mathbb{R}^{16}$. Together with $r_0 = r(0)$, we have the standard equations that represent an autonomous state-space system with \emph{free response} $r(t) = e^{t(A+L)}r_0$.

The system has a single zero eigenvalue, and the nullspace of $A+L$ provides the steady-state associated with this zero eigenvalue, denoted as $r_{\ss}$. We define the output as the scalar $y(t) = r_{\ss}^{T}r(t)$ where $0 \leq y(t) \leq 1$, and $y(t)$ represents the overlap of the current state with the steady-state. We define the overlap error as $1 - y(t) = 1 - r_{\ss}^Tr(t)$. Noting that for any state $\rho(t)$, $r_{N^2} = \Tr\left( (1/\sqrt{N}) \rho(t)\right) = 1/\sqrt{N}$ as a consequence of the constancy of the trace for density matrices, we have $1 = r_1^Tr(t)$, where $r_1$ is a vector of all zeros save for the $N^2$-th entry, which is $\sqrt{N}$. We can then simplify the expression for the error as $(r_1 - r_{\ss})^Tr(t) = cr(t)$ where $c \in \mathbb{R}^{N^2 \times 1}$ consisting of $c_n = -r_{{\ss}_n}$ for $n=1, \dotsc, N^2-1$ and $c_{N^2} = \frac{N-1}{\sqrt{N}}$. Specifically for this case we have $c_{16} = \frac{3}{2}$.

Perturbations $S_1$ through $S_4$ map linearly to the Bloch formulation~\cite{PhysRevA.93.063424,PhysRevA.81.062306,JPhysA37,neat_formula} via~\eqref{eq:bloch1} to produce a structure matrix $\tilde{S} \in \mathbb{R}^{16 \times 16}$. Thus, a differential perturbation of the form $\Delta \xi S_n$ for  $n \in \{ 1, \ 2, \ 3, \ 4 \}$ in~\eqref{eq:two-qubit} maps to $\Delta\xi \tilde{S}$, and we have the following perturbed form of the time evolution of the overlap error:
\begin{equation}
  e(t) = ce^{t(A + L + \Delta \xi \tilde{S})}r_0.
\end{equation}
This allows us to compute the derivative of $e(t)$ with respect to perturbations in $\xi$ structured as $\tilde{S}$ in accordance with~\eqref{eq:limit} and~\eqref{eq:log-sensitivity}.

Before proceeding to the behavior of $s(\xi_0,t)$ we note:
\begin{enumerate}
\item In contrast to the two classical examples, there is no full-state feedback that modifies the dynamics of the system. The control is accomplished through the driving fields $\alpha_n$ and detuning $\Delta_n$ to modify the evolution of the state in an \emph{a priori} manner. 
\item As opposed to the classical case studies, we do not assume a zero initial state. The probability that the two-qubit ensemble is in \emph{some} state at $t=0$ requires a non-zero $\rho_0$ or equivalently $r_0 \neq 0$.
\end{enumerate}
Despite these differences, the mathematical form of $e(t)$ is identical to the classical formulation and amenable to the same results derived in Section~\ref{math}.

\subsection{Log-Sensitivity of the Error}

In accordance with~\eqref{eq:limit} and~\eqref{eq:log-sensitivity}, we calculate the derivative of the error to perturbations in $\alpha_n$ and $\Delta_n$ by 
\begin{equation}\begin{aligned}
  \frac{\partial e(t)}{\partial \xi} &=  \lim_{\Delta \xi \rightarrow 0} \frac{1}{\Delta\xi}c(e^{t(L+A+\Delta \xi \tilde{S})} - e^{t(A+L)})r_0\\
  &= D_{\tilde{S}}(t,A+L).
\end{aligned}\end{equation}
The two dominant eigenvalues of $A+L$ are $\lambda_1 = 0$, followed by a purely real eigenvalue of $\lambda_2 = -0.0035$. The $\bar{s}_{11}$ term is zero for all perturbations considered and does not contribute to the sum for $s(\xi_0,t)$. Thus, in accordance with Theorem~\ref{thm: linear_div}, we anticipate that the behavior of $s(\xi_0,t)$ is dominated by $\lambda_2$ and the associated structure term $\bar{s}_{22}$, and we expect the slope of the divergence to be equal to $\left|\xi_0 \bar{s}_{22} \right|$.

The result for a differential perturbation in the driving fields $\alpha_1$ or $\alpha_2$ is illustrated in Fig.~\ref{fig:qubit1}. With a nominal value of $\alpha_1 = \xi_0 = 1$ and $\bar{s}_{22} = 0.00344$, the predicted slope of $0.00344$ is borne out by Fig~\ref{fig:qubit1}. A perturbation to $\alpha_2$ with structure $S_2$ yields the same result.

\begin{figure}
\centering
\includegraphics[width=\columnwidth]{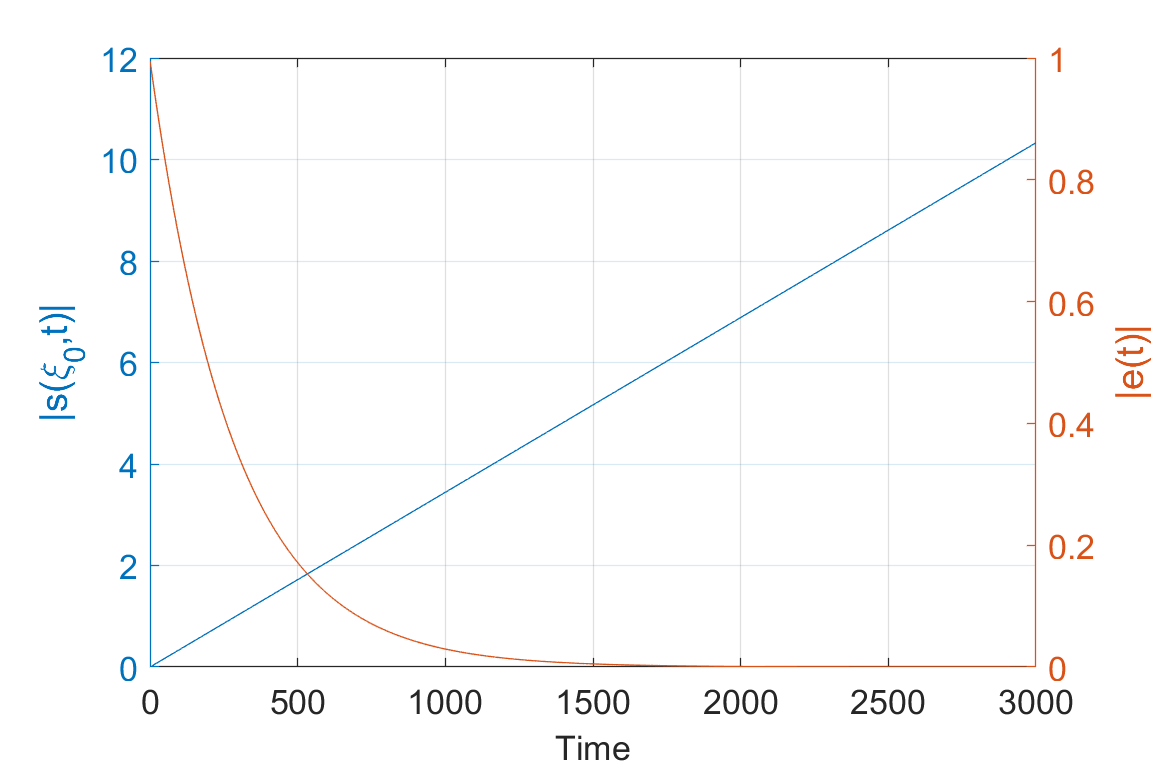}
\caption{Divergence of $\left| s(\xi_0,t) \right|$ for perturbations to the driving fields $\alpha_{1,2}$, with slope given by $\left| \xi_0 \bar{s}_{22} \right| = \left| (1)(0.00344) \right| = 0.00344$.}\label{fig:qubit1}
\end{figure}

The case for a perturbation to the detuning parameters is illustrated in Fig.~\ref{fig:qubit2}. As predicted by Theorem~\ref{thm: linear_div}, we observe a slope of $\left| \xi_0 \bar{s}_{22} \right| = \left| (\pm 0.1) (0.0351) \right| = 0.00351$.

\begin{figure}
\centering
\includegraphics[width=\columnwidth]{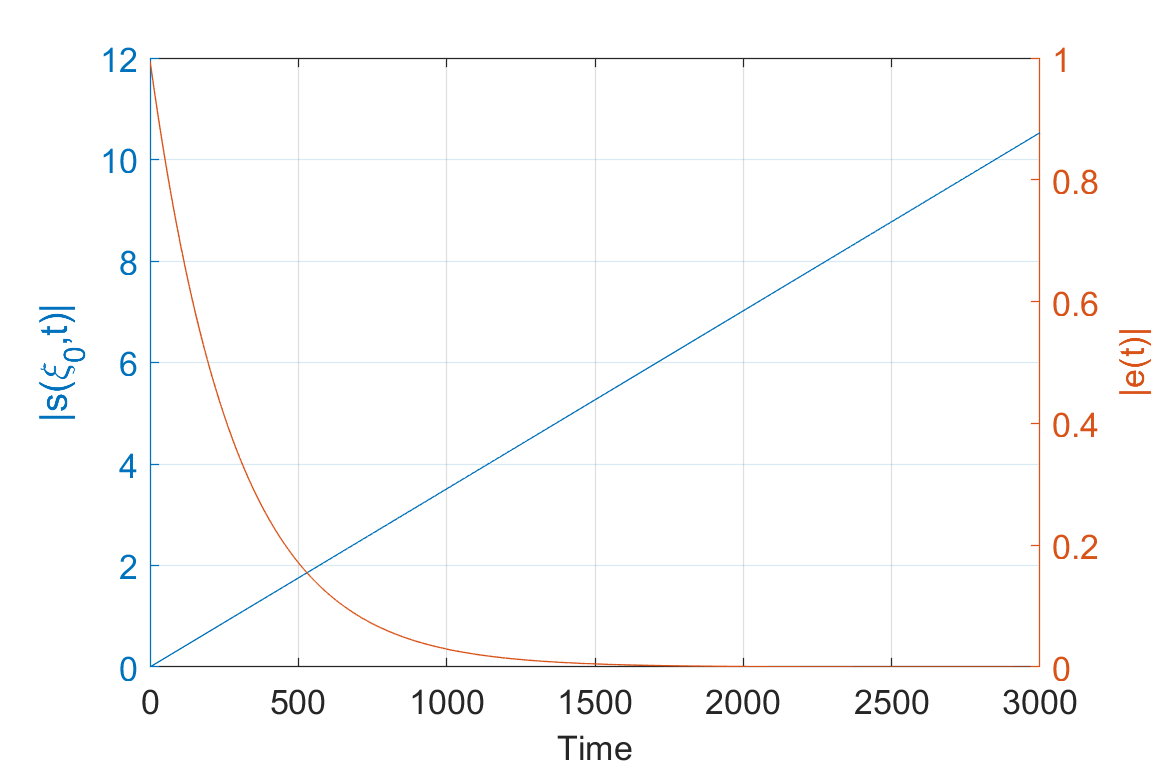}
\caption{Divergence of $ \left| s(\xi_0,t) \right|$ for perturbations to the detuning parameters $\Delta_{1,2}$ with slope given by $\left|\xi_0  \bar{s}_{22} \right| = \left| (-0.1)(-0.0351) \right| = 0.00351$.}\label{fig:qubit2}
\end{figure}

\section{Closed Quantum System Example -- Perfect State Transfer}\label{state transfer}
 
\subsection{System Description and Problem Formulation}

Consider a system designed to facilitate perfect state transfer in a chain of $N$ particles characterized by spins~\cite{Christandl2004,NEPOMECHIE1999}. Though the procedure is generally applicable to multiple excitations, for simplicity we restrict our attention to the case of transfer of a single excitation without dissipation. This is the so-called \emph{single excitation subspace}.
 
As in~\cite{Christandl2004}, we represent the state of the system as a column vector $\ket{\psi} \in \mathbb{C}^{N}$ with a one in the $n$-th entry to denote a single excitation that is associated with the $n$-th spin. The design goal is to transfer the single excitation from spin $1$ ($\ket{\psi_{\text{IN}}} = \begin{bmatrix} 1 & 0 & 0 & \cdots & 0 \end{bmatrix}^T$) to $N$ ($\ket{\psi_{\text{OUT}}} = \begin{bmatrix} 0 & 0 & \cdots & 1 \end{bmatrix}^T$) at a finite time $T = \pi/\lambda$. Here, $\lambda$ is a parameter chosen to regulate the speed of the transfer. By engineering the nearest-neighbor couplings in accordance with $J_n = (\lambda/2)\sqrt{n(N-n)}$ where $J_n$ is the coupling between spins $n$ and $n+1$, we create a Hamiltonian with $J_n$ in the $(n,n+1)$ and $(n+1,n)$ positions for $n = 1, 2, \dotsc, N-1$ and zero otherwise.

The dynamics governing the system are given by the autonomous system
\begin{equation}\label{eq:perfect}
  \ket{\dot{\psi}(t)} = -jH\ket{\psi(t)}, \quad \ket{\psi(0)} = \ket{\psi_{\IN}}
\end{equation}
with solution $\ket{\psi(t)} = e^{-jHt}\ket{\psi_{\IN}}$. Since the overlap $\bra{\psi_{\OUT}} \psi(t) \rangle$ is complex, we transform this to the Bloch formulation to retain congruence with the previous sections and ensure a real fidelity and complementary error.

We use the generalized Gell-Mann basis~\cite{Bertlmann2008} of traceless, Hermitian $N \times N$ matrices for $\sigma_1$ through $\sigma_{N^2-1}$ with $\sigma_{N^2}$ as described in Section~\ref{qubit}. Applying~\eqref{eq:bloch1} and~\eqref{eq:bloch3} to the system of~\eqref{eq:perfect}, we get
\begin{subequations}\label{eq:bloch}\begin{align}
  \dot{r}(t) &= Ar(t), \quad r_{\IN} = r(0), \\
  r(t) &= e^{At}r_{\IN}, \\
  e(t) &= c r(t).
\end{align}\end{subequations}
$r_{\IN}$ is the Bloch-transformed version of $\rho_{\IN} = \ket{\psi_{\IN}}\bra{\psi_{\IN}}$, $r_{\OUT}$ is the transformed version of $\rho_{\OUT} = \ket{\psi_{\OUT}} \bra{\psi_{\OUT}}$, $c$ produces the error from the current state in the same manner as Section~\ref{qubit}, and $A$ is the Bloch-transformed Hamiltonian.

 \subsection{Log-Sensitivity of the Error}

With the purely coherent dynamics of~\eqref{eq:perfect}, perturbations of the Hamiltonian map linearly to the Bloch formulation via~\eqref{eq:bloch1}. For an $N$-chain we consider the $N-1$ possible perturbations to coupling strengths. These are structured as $S_n$, an $N \times N$ matrix with zeros everywhere save for ones in the $(n+1,n)$ and $(n,n+1)$ positions. This is then mapped to an $N^2 \times N^2$ matrix in the Bloch formulation via~\eqref{eq:bloch1} with $S$ interchanged with $jH$.

In the Bloch formulation, the matrix $A$ has $N$ eigenvalues at zero and the remaining $N^2-N$ eigenvalues in purely imaginary complex conjugate pairs. In accordance with Theorem~\ref{thm: periodic}, we expect the log-sensitivity to exhibit oscillations of increasing magnitude that achieve local maxima with a period given the fundamental frequency of the set $\{\omega_n\}$; more general chains would show aperiodicity~\cite{QuRecThm}.

Given that $A$ is a normal matrix, we decompose it as $A = V \Lambda V^{\dagger}$ where $VV^{\dagger} = I$. Retaining the same notation as in Section~\ref{math}, we have $z_n = c v_n$ and $w_l =  v_l^{\dagger} r_0 $ where $v_k$ is the $n$-th column of $V$ and $v_l^{\dagger}$ is the conjugate transpose of the $l$-th column of $V$. We then compute $s(\xi_0,t)$ in accordance with~\eqref{eq:limit} and~\eqref{eq:log-sensitivity}.

In Figs.~\ref{fig:perfect1} and~\ref{fig:perfect2} we show the behavior of a two-chain with $\lambda = \pi/5$ and perturbation on the coupling between the two spins with nominal value $J_1 = \pi/10$. Fig.~\ref{fig:perfect3} depicts the same for a chain of three spins and perturbation on the $2$--$3$ coupling with nominal value $J_2 = \sqrt{2} \pi/10$. In both cases, $s(\xi_0,t)$ does not have a defined limit, demonstrates the periodic spikes at times of perfect state transfer, and grows with time in accordance with Theorem~\ref{thm: periodic} and the accompanying corollary.
\begin{figure}[t]
\centering
\includegraphics[width=\columnwidth]{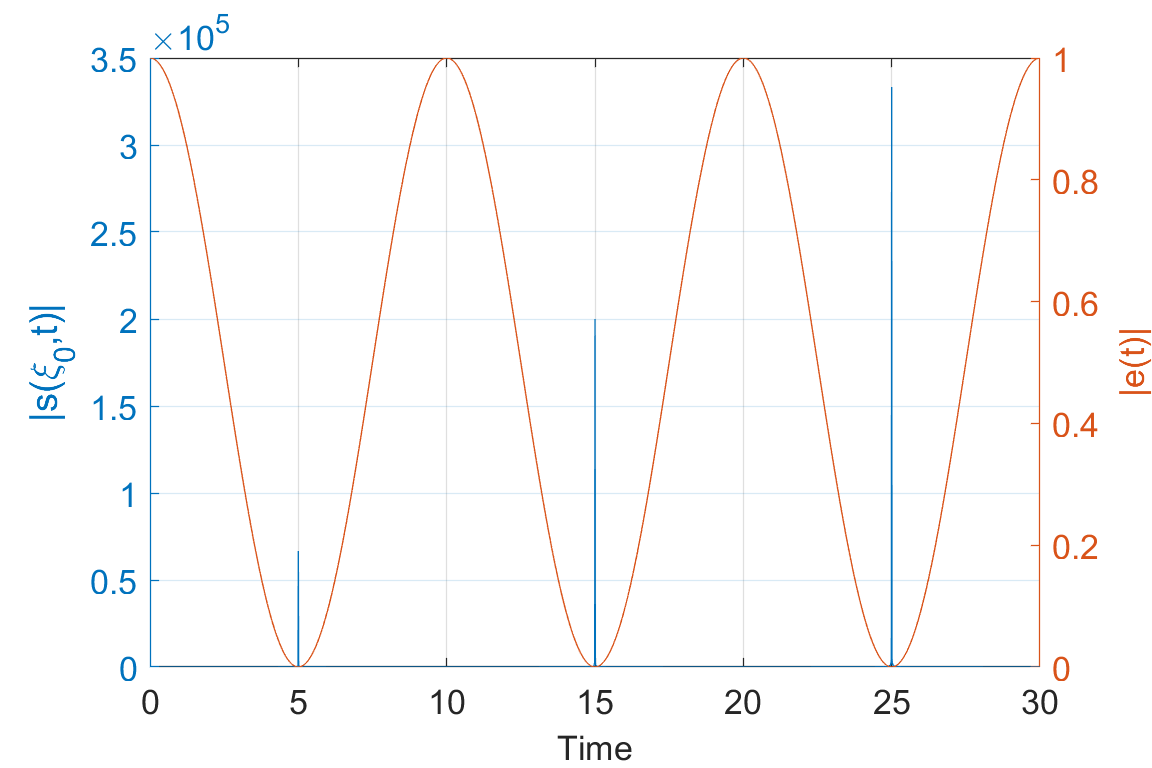}
\caption{Plot of $\left|s(\xi_0,t) \right|$ and $|e(t)|$ on a linear scale for a two-chain with perturbation on the $J_1$ coupling.}
\label{fig:perfect1}
\end{figure}

\begin{figure}[t]
\centering
\includegraphics[width=\columnwidth]{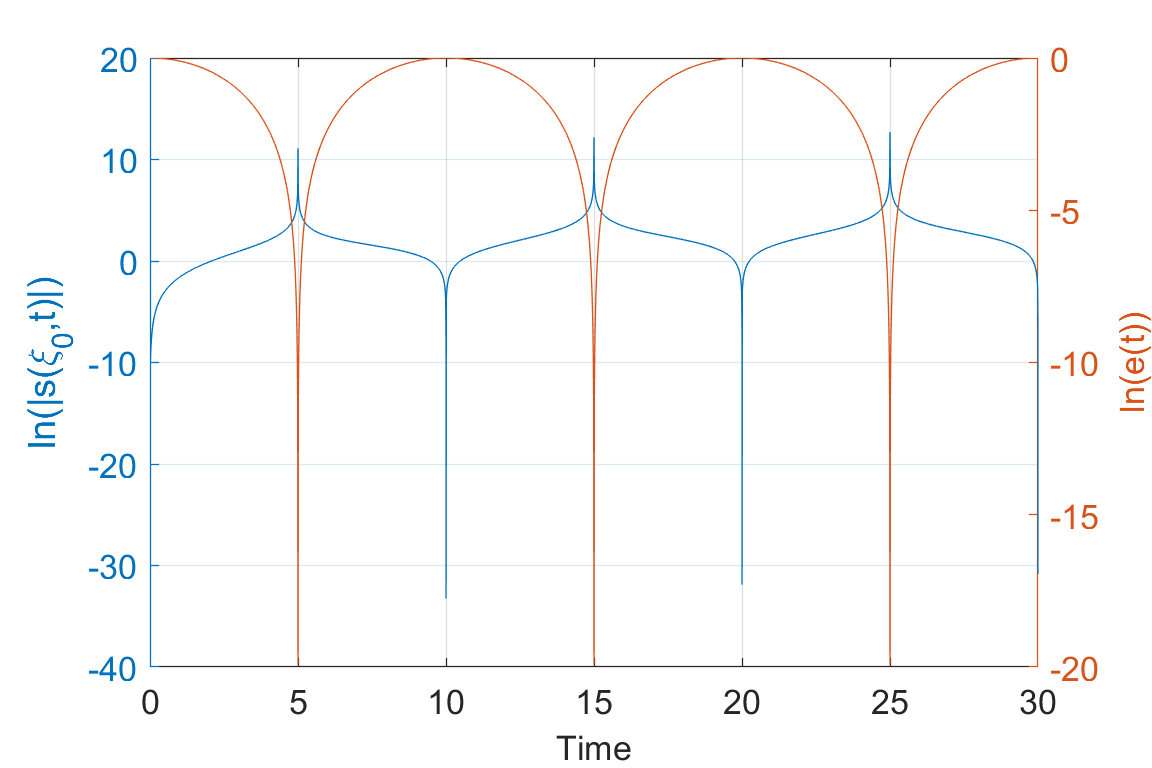}
\caption{Plot of $\left|s(\xi_0,t) \right|$ and $|e(t)|$ on a log scale for a two-chain with perturbation on the $J_1$ coupling.}\label{fig:perfect2}
\end{figure}

\begin{figure}[t]
\centering
\includegraphics[width=\columnwidth]{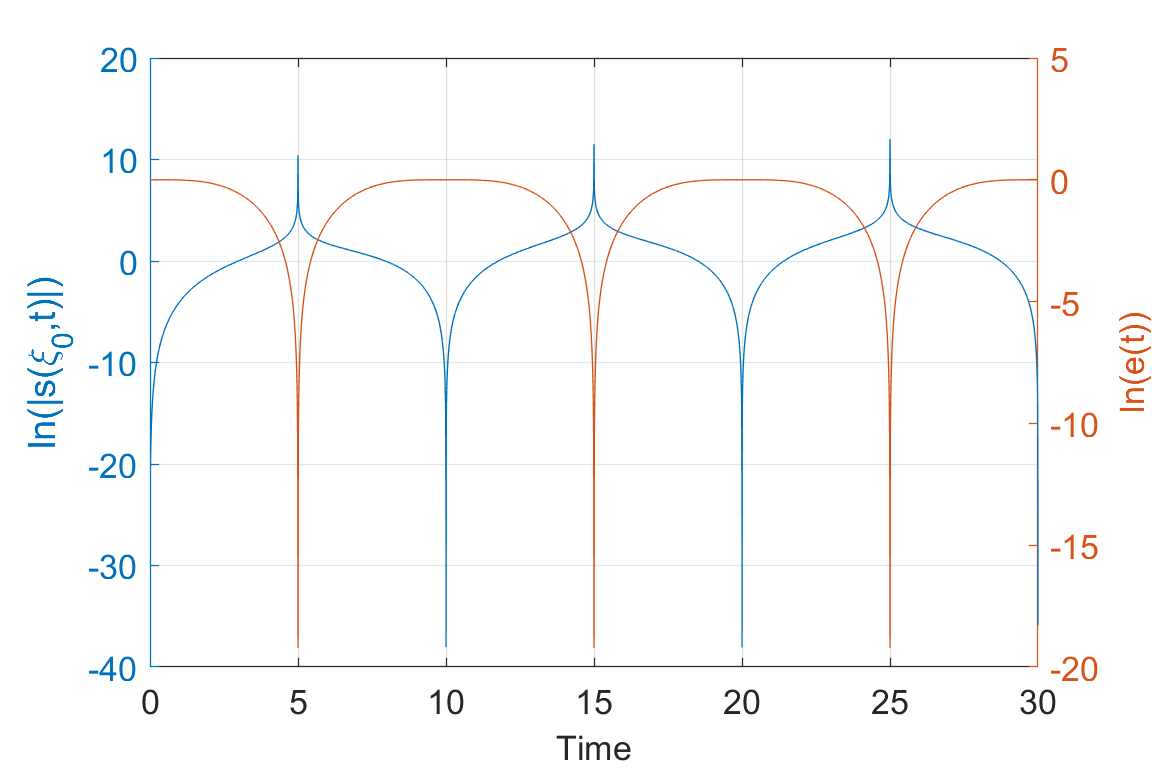}
\caption{Plot of $\left|s(\xi_0,t) \right|$ and $|e(t)|$ on a log scale for a three-chain with perturbation on the $J_2$ coupling.}\label{fig:perfect3}
\end{figure}

Additionally, there is no contradiction with the assertion of earlier work~\cite{Schirmer2016}, that under the conditions for perfect state transfer (superoptimality) the \emph{sensitivity} of the error goes to zero. Though,~\cite{Schirmer2016} states this characteristic holds for rings, we can see that it also holds for the chains engineered for perfect state transfer considered here. For $N=2$, calculation of the output matrix in accordance with Section~\ref{qubit} yields $c = \begin{bmatrix}0 & 0 & \frac{1}{\sqrt{2}} & \frac{1}{\sqrt{2}} \end{bmatrix}$ with $r(t) = \begin{bmatrix} 0 & \frac{1}{\sqrt{2}}\sin(\frac{\pi}{5}t) & \frac{1}{\sqrt{2}}\cos(\frac{\pi}{5}t) & \frac{1}{\sqrt{2}}\end{bmatrix}^{T}$. The resulting error is $e(t) = cr(t) = \frac{1}{2} \left( 1 + \cos(\frac{\pi}{5}t) \right)$, and the \emph{sensitivity} is $\partial e(t)/\partial \xi = t \sin(\pi/5 t)$. Thus, $\partial{e(t)}/\partial \xi = 0$ if $t = t_n = 5(2n + 1)$. Regarding the \emph{log-sensitivity,} we have
\begin{equation}\label{eq:sensitivity limit}
  \lim_{t \rightarrow t_n} \frac{\partial e(t)}{\partial \xi}\frac{\xi}{e(t)} \bigg|_{\xi_0} = \frac{0}{0}.
\end{equation} 
Applying L'Hopital's rule to this indeterminate form yields
\begin{equation}\label{eq:lhopital}
  \lim_{t \rightarrow t_n} \frac{\frac{\pi}{10} \left( \sin(\frac{\pi}{5}t) + \frac{\pi}{5}t\cos(\frac{\pi}{5}t \right)}{-\frac{\pi}{10}\sin(\frac{\pi}{5}t)} = \frac{\pi(2n+1)}{0},
\end{equation}
which is consistent with the expected result for the log-sensitivity. A similar argument holds for the case of $N=3$ with perturbation on the $2$--$3$ coupling. For this scenario,
\begin{align*}
  \frac{\partial e(t)}{\partial \xi} &= -\frac{\sqrt{2}}{4} t \sin\left(\frac{\pi}{5}t\right) + \frac{\sqrt{2}}{8} t \sin \left(\frac{2\pi}{5} t \right),\\
  e(t) &= \frac{5}{8} + \frac{1}{2} \cos \left( \frac{\pi}{5} t \right) - \frac{1}{8} \cos \left( \frac{2 \pi}{5} t \right).
\end{align*}
Applying the same procedure as~\eqref{eq:sensitivity limit} and~\eqref{eq:lhopital} to the equations above yields the same result: the \emph{sensitivity} $\frac{\partial e(t)}{\partial \xi} \rightarrow 0$ as $t \rightarrow t_n$. However, the \emph{log-sensitivity} goes to infinity at each $t_n$ determined by the fundamental frequency $\omega = \frac{\pi}{5}$ of the pair $\{\frac{\pi}{5},\frac{2\pi}{5} \}$.
Furthermore, we note the trade-off between the error and the log-sensitivity -- the periods of near-zero error (near perfect fidelity) correspond to those with the greatest log-sensitivity.  Table~\ref{table1} shows the trade-off between log-sensitivity and fidelity numerically.

\begin{table}[t]
\caption{Fidelity vs log-sensitivity for two- and three-chains at first fidelity maximum $t = 5$ under coupling perturbation.}\label{table1}
\centering
\begin{tabular}{|| r@{.}l | r@{.}l || r@{.}l | r@{.}l ||} \hline
  \multicolumn{4}{||c||}{$N=2$, $1$--$2$ Coupling} & \multicolumn{4}{c||}{$N=3$, $2$--$3$ Coupling} \\
  \hline
  \multicolumn{2}{||c|}{Fidelity} & \multicolumn{2}{c||}{$|s(\xi_0,t) |$} & \multicolumn{2}{c|}{Fidelity} & \multicolumn{2}{c||}{$|s(\xi_0,t) |$} \\
  \hline
  1&0     & 66664&00   & 1&0     & 24998&00\\\hline
  0&9999  &   311&95   & 0&9999  &   220&72\\\hline
  0&99899 &    97&271  & 0&99899 &    69&195\\\hline
  0&98999 &    29&264  & 0&98996 &    21&079\\\hline
  0&90001 &     7&4949 & 0&90008 &     5&6196\\\hline
\end{tabular}
\end{table}

\section{Discussion and Conclusions}

We have shown that the log-sensitivity of the error can be reliably computed from a time-domain perspective. More importantly, this robustness measure is applicable to a spectrum of classical and quantum systems and exhibits the same key characteristic: as performance increases (error gets smaller) the measure of performance is more sensitive to parameter variation.

Within the context of previous work on the robustness of energy landscape shaping, we begin to see the pattern of results regarding error versus log-sensitivity unified under this time-domain specification. In~\cite{Schirmer2016}, we demonstrate that when conditions for superoptimality prevail in a ring, the sensitivity to parameter variation vanishes. As shown here in Section~\ref{state transfer} we obtain the same for chains. More importantly, while the sensitivity vanishes, the log-sensitivity diverges at the instants of perfect state transfer, as predicted by the theory. In~\cite{Schirmer2018}, the trend of lower fidelity controllers exhibiting lower sensitivity to decoherence was observed by calculating the derivative of the error through a finite difference approximation, in agreement with the analytical methods presented here. While the overall trend suggested discordance between lower error and lower sensitivity, the trend was far from uniform. In the present paper, the sharp increase of the log-sensitivity as the fidelity approaches its maximum is seen in Table~\ref{table1} and suggests a justification for this variability of log-sensitivity for extremely high-fidelity controllers. Taken together, this indicates that designing controllers with an acceptable error and guaranteed robustness margin is possible.

Next, we note that the methodology of this paper is applicable to both open and closed quantum systems. Previous work on the application of classical robust control techniques to quantum systems has focused on \emph{open} quantum systems (i.e., those with dissipative behaviors that produce left-half plane poles), such as the $\mu$-analysis of~\cite{Schirmer2022} or a classically-inspired stability margin~\cite{CDC2022}. While~\cite{ IRP_quantum_survey_as_published, Petersen2013, IRP_old, coherent_h-infinity} apply $H^{\infty}$ methods with great success to a specific class of optical systems described by linear quantum stochastic differential equations, dissipation is still a necessary component to ensure application of the bounded real lemma. While~\cite{Koswara_2021} concludes that a tradeoff between performance and robustness is necessary in closed quantum systems, the approach is purely stochastic, based on the expected value versus the variance of the optimization functional. 

In terms of future work, while we have shown a classical trend between the error and log-sensitivity, we have not shown any guaranteed robustness bounds along the lines of the identity $S(j\omega) + T(j\omega) = I$. 
Secondly, the behavior of the log-sensitivity at the transition from complex to repeated, real eigenvalues still requires attention.  Furthermore, to bolster applicability to quantum networks, extension to non-linear and non-autonomous systems with time-varying controls and non-linear performance measures such as concurrence~\cite{EntanglementConcurrence} to measure entanglement is necessary.

\bibliographystyle{ieeetr}
\bibliography{export}

\appendix

We use the Jordan decomposition to derive a general formula for the matrix derivative. Any square matrix $A$ of dimension $N$ over the field of complex numbers is similar to a Jordan normal form $A = M J M^{-1}$, where $J$ is the direct sum of $\ell$ Jordan blocks, each with dimension $n_m$ so that $\sum_{m=1}^\ell n_m = N$~\cite{Belevitch}. There are two cases. If all Jordan blocks have dimension $1$ then $A$ is said to be diagonalizable. If there are eigenvalues whose geometric multiplicity is smaller than their algebraic multiplicity then the Jordan decomposition has nontrivial Jordan blocks. Since the diagonalizable matrices form an open and dense subset in the space of matrices, this case is generic.

\subsection{Generic Case: Diagonalizable $A$}\label{ss: generic}

When $A_0$ is diagonalizable, $A_0 = M \Lambda M^{-1}$, $e^{A_0t} = M e^{t \Lambda} M^{-1}$ where $\Lambda$ is a diagonal matrix of eigenvalues $\lambda_m$. We then have~\cite{Havel1995}
\begin{equation}
  \frac{\partial}{\partial \xi} e^{At} = M (\bar{S} \odot \Phi(t) ) M^{-1},
\end{equation}
where $\bar{S} = M^{-1}SM$, $\odot$ is the Hadamard product, and the elements $\phi_{mn}(t)$ of $\Phi(t)$ are as defined in~\eqref{eq: phi}. Let $\{ \hat{e}_m \} \in \mathbb{R}^{N}$ be the set of natural basis vectors for $\mathbb{R}^{N}$ with a $1$ in the $m$-th position and zeros elsewhere. Define a basis for the $N \times N$ space of linear operators on $\mathbb{R}^{N}$ as $\Pi_{mn} = \hat{e}_m \hat{e}_n^{T}$.  Then $\frac{\partial e^{At}}{\partial \xi} = M X(t) M^{-1}$ with
\begin{multline}\label{eq:diag}
  X(t) =  \sum_{m,n} \bar{s}_{mn}\phi_{mn}(t) \Pi_{mn} = \\
   \sum\limits_{\substack{m,n \\ \lambda_m=\lambda_n}} \bar{s}_{mn}te^{\lambda_m t} \Pi_{mn} + \sum\limits_{\substack{m,n \\ \lambda_m \neq \lambda_{n}}} \bar{s}_{mn} \frac{e^{\lambda_m t} - e^{\lambda_n t}}{\lambda_m - \lambda_n} \Pi_{m n} 
\end{multline}
where $\bar{s}_{mn}$ is the element in the $m,n$ position of $\bar{S}$.

\subsection{Non-Generic Case: Non-Trivial Jordan Decomposition}\label{ss: jordan_case}

Consider the case of algebraic multiplicity $\ell$ in the dominant eigenvalue $\lambda_1$ with geometric multiplicity $1$ and the remaining $N-\ell$ eigenvalues distinct. Note that application of the following is restricted to the case where a versal deformation of the Jordan normal form $J$ in terms of $\xi$ does not admit a bifurcation in the spectrum of $A$~\cite{versal_deformation,arnold}, which would break the degeneracy and default to the generic case of Section~\ref{ss: generic}. Write the matrix exponential $Me^{Jt}M^{-1}$ as
\begin{equation}
  M \left( \sum\limits_{m=1}^N e^{\lambda_m t} \Pi_{mm} + \sum\limits_{p=1}^{\ell-1} \sum\limits_{q = p+1}^{\ell} e^{\lambda_1 t} \frac{t^{(q-p)}}{(q-p)!}\Pi_{pq} \right) M^{-1}.
\end{equation}
Let $\lambda_m = \lambda_1$ for $m = 1$ to $\ell$ so that the first eigenvalue not identical to $\lambda_1$ is $\lambda_{\ell+1}$.  In accordance with Eq.~\eqref{eq: matrix_deriv}, we have
\begin{multline}
   Me^{J(t-\tau)}\bar{S}e^{J\tau}M^{-1} \\ 
   = M \left( \sum\limits_{m,n = 1}^{N} e^{\lambda_m t} e^{(\lambda_n - \lambda_m ) \tau} \Pi_{mm}\bar{S} \Pi_{nn} +  \right. \\
   \sum\limits_{m = 1}^N \sum\limits_{r = 1}^{\ell-1} \sum\limits_{n=r+1}^\ell e^{\lambda_m t} e^{(\lambda_1 - \lambda_m) \tau} \frac{\tau^{(n-r)}}{(n-r)!} \Pi_{mm}\bar{S}\Pi_{rn} +\\
   \sum\limits_{m = 1}^{N} \sum\limits_{p=1}^{\ell-1}\sum\limits_{q=p+1}^\ell e^{\lambda_1 t} e^{(\lambda_m - \lambda_1) \tau} \frac{(t - \tau)^{(q-p)}}{(q-p)!}\Pi_{pq}\bar{S}\Pi_{mm} + \\
   \left.  \sum\limits_{\substack{p = 1\\  r=1}}^{\ell-1} \sum\limits_{\substack{q = p +1\\ n = r + 1}}^\ell e^{\lambda_1 t} \frac{(t-\tau)^{(q-p)}\tau^{(n-r)}}{(q-p)! (n-r)! }\Pi_{pq}\bar{S}\Pi_{rs} \right) M^{-1} \\
   = M \left[ \mathcal{X}_1(t) + \mathcal{X}_2(t) + \mathcal{X}_3(t) + \mathcal{X}_4(t) \right] M^{-1}. 
\end{multline}
Calculation of $M \left[ \int_{0}^{t} e^{J(t-\tau)}\bar{S}e^{J \tau} d \tau \right] M^{-1}$ produces the following:

Firstly, $\int_{0}^{t} \mathcal{X}_1(\tau) d \tau = X_1(t)$ produces the same result as a fully diagonalizable matrix with $\ell$ repeated eigenvalues $\lambda_1$ with solution given by the term in parentheses in~\eqref{eq: diagonal_repeats}. Secondly,
\begin{multline} \label{eq:B int}
    \int_{0}^t \mathcal{X}_2(\tau) d \tau
    = \sum\limits_{m = 1}^{\ell} \sum\limits_{r=1}^{\ell-1} \sum\limits_{n=r+1}^\ell e^{\lambda_1 t} \frac{t^{(n - r +1)}}{(n - r + 1)!} \bar{s}_{mr} \Pi_{mn}\\
    + \sum\limits_{m = \ell +1}^{N}\sum\limits_{r=1}^{\ell-1}\sum\limits_{n = r+1}^{\ell}  \left( \sum_{i = 0}^{n-r} \frac{(-1)^{i} e^{\lambda_1 t}(n-r)! t^{(n-r-i)}}{(n-r-i)!(\lambda_1 - \lambda_m)^{(i+1)}} \right.\\
    \left. + \frac{(-1)^{(n-r+1)} (n-r)! e^{\lambda_m t}}{(\lambda_1 - \lambda_m)^{(n-r+1)}} \right) \bar{s}_{mr} \Pi_{mn} = X_2(t).
\end{multline}
Likewise, 
\begin{multline} \label{eq:C int}
    \int_{0}^t \mathcal{X}_3(\tau) d \tau
    =\sum\limits_{m = 1}^{\ell} \sum\limits_{p=1}^{\ell-1} \sum\limits_{q=p+1}^\ell e^{\lambda_1 t} \frac{t^{(q - p +1)}}{(q-p + 1)!} \bar{s}_{q m} \Pi_{p m}\\
    + \sum\limits_{m = \ell+1}^{N}\sum\limits_{p=1}^{\ell-1}\sum\limits_{q = p+1}^{\ell}  \left( \sum_{i = 0}^{q-p}\frac{(-1)^{i} e^{\lambda_1 t} (q-p)! t^{(q-p-i)}}{(q-p-i)!(\lambda_1 - \lambda_m)^{(i+1)}} \right. \\
    \left. + \frac{(-1)^{(q-p+1)} (q-p)! e^{\lambda_m t}}{(\lambda_1 - \lambda_m)^{(q-p+1)}} \right) \bar{s}_{q m}\Pi_{p m} = X_3(t).
\end{multline}
Integrating on $\mathcal{X}_4(t)$ provides 
\begin{multline}
  \int_0^t \mathcal{X}_4(\tau) d \tau \\ 
  = \sum\limits_{\substack{p =1 \\ r=1}}^{\ell-1} \sum\limits_{\substack{q = p +1 \\ n = r+1}}^{\ell} \frac{e^{\lambda_1 t}t^{(q-p + n-r + 1)}}{(q-p + n - r +1)!} \Pi_{pq}\bar{S} \Pi_{rn} = X_4(t).
\end{multline}
We thus have $\frac{\partial e^{At}}{\partial \xi} = M X M^{-1}$ with
\begin{equation}\label{eq:non-diag}
  X(t) = \sum_{m=1}^4 X_m(t).
\end{equation}

\end{document}